\newtheorem{lemma}{Lemma}
\newtheorem{theorem}{Theorem}
\newtheorem{definition}{Definition}
\newtheorem*{remark}{Remark}
\title{Finding Needles in a Haystack: Missing Tag Detection in Large RFID Systems}
\author{Jihong Yu, Lin Chen, Kehao Wang \thanks{J. Yu and L. Chen are with Lab. Recherche Informatique (LRI-CNRS UMR 8623), Univ. Paris-Sud, 91405 Orsay, France, \{jihong.yu, chen\}@lri.fr. K. Wang is with Dept. Inform. Eng., Wuhan University of Technology, China, kehao.wang@whut.edu.cn.}}
\begin{document}

\maketitle

\begin{abstract}
Radio frequency identification (RFID) technology has been widely used in missing tag detection to reduce and avoid inventory shrinkage. In this application, promptly finding out the missing event is of paramount importance. However, existing missing tag detection protocols cannot efficiently handle the presence of a large number of unexpected tags whose IDs are not known to the reader, which shackles the time efficiency. To deal with the problem of detecting missing tags in the presence of unexpected tags, this paper introduces a two-phase Bloom filter-based missing tag detection protocol (BMTD). The proposed BMTD exploits Bloom filter in sequence to first deactivate the unexpected tags and then test the membership of the expected tags, thus dampening the interference from the unexpected tags and considerably reducing the detection time. Moreover, the theoretical analysis of the protocol parameters is performed to minimize the detection time of the proposed BMTD and achieve the required reliability simultaneously. Extensive experiments are then conducted to evaluate the performance of the proposed BMTD. The results demonstrate that the proposed BMTD significantly outperforms the state-of-the-art solutions.

\end{abstract}


\section{Introduction}

\subsection{Background}
Recent years have witnessed an unprecedented development of the radio frequency identification (RFID) technology. As a promising low-cost technology, RFID is widely utilized in various applications ranging from inventory control~\cite{DoD2004}~\cite{DoD2007}~\cite{bu2012misplaced}, supply chain management and logistics~\cite{lee2008supply}~\cite{sheng2008finding}~\cite{chen2011efficient}~\cite{qiao2011polling}~\cite{liu2014query}
to tracking/location \cite{ni2011tracking}~\cite{yang2013localization}~\cite{han2014twins}. In these applications, an RFID system typically consists of one or several RFID readers and a large number of RFID tags. Specially, the RFID reader is a device equipped with a dedicated power source and an antenna and can collect and process the information of tags within its coverage area. An RFID tag, on the other hand, is a low-cost microchip labeled with a unique serial number (ID) to identify an object and can receive and transmit the radio signals via the wireless channel.
More specifically, the tags are generally classified into two categories: passive and active tags. Passive tags are energized by the radio wave of the reader, while active tags have power sources and relatively long communication range.

\subsection{Motivation and problem statement}
According to the statistics presented in~\cite{National2015}, inventory shrinkage, a combination of shoplifting, internal theft, administrative and paperwork error, and vendor fraud, resulted in 44 billion dollars in loss for retailers in $2014$.
Fortunately, RFID technology can be used to reduce the cost by monitoring products for its low cost and non-line-of-sight communication pattern.
Obviously, the first step in the application of loss prevention is to determine whether there is any missing tag. Hence, quickly finding out the missing tag event is of practical importance.

The presence of unexpected tags, however, prolongs the detection time and even leads to miss detection.
Here, we present two examples to motivate the presence of unexpected tags in realistic scenarios.
\begin{itemize}
\item \emph{Example 1.} Consider a retail store with expensive goods and a much larger amount of inexpensive goods, and an RFID system is deployed to monitor the goods. Because of the higher value of expensive products, they are expected to be detected more frequently, but the tags of inexpensive goods also response the interrogation of readers, which influences the decision of readers.
\item \emph{Example 2.} Consider a large warehouse rented to multiple companies where the products of the same company may be placed in different zones according to their individual categories, such as child food and adult food, chilled food and ambient food. When detecting the tags identifying products from one company, readers also receive the feedbacks from the tags of other companies.
\end{itemize}

In both examples, how to effectively reduce the impact of unexpected tags is of critical importance in missing tag detection. In this paper, we consider a scenario, as depicted in Fig.~\ref{Fig:single_reader}, where each product is affixed by an RFID tag. The reader stores the IDs of expected tags. The problem we address is how to detect missing expected tags in the presence of a large number of unexpected tags in the RFID systems in a reliable and time-efficient way.

\begin{figure}[htbp]
\vspace{-0cm}
\centering
\includegraphics[width=8cm]{./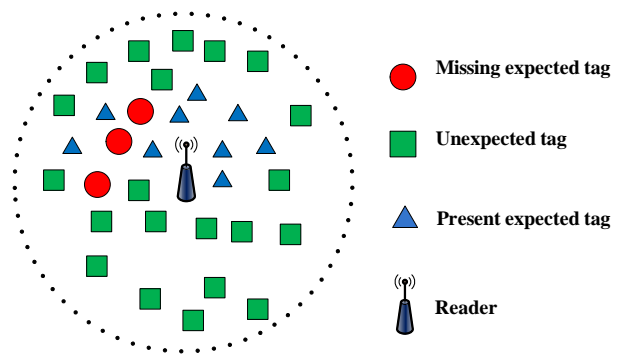}
\vspace{-0cm}
\caption{Missing tag detection with the presence of unexpected tags.}
\label{Fig:single_reader}
\end{figure}

\subsection{Prior art and limitation}
Prior related work can be classified into three categories from the perspective of detecting missing tags: missing tag detection protocols, tag identification protocols, and tag estimation protocols.

There are two types of missing tag detection protocols: probabilistic~\cite{tan2008monitor}~\cite{luo2012probabilistic}~\cite{luo2014missing}~\cite{shahzad2015expecting} and deterministic~\cite{li2010identifying}~\cite{zhang2011fast}~\cite{liu2015completely}. The probabilistic protocols find out a missing tag event with a certain required probability if the number of missing tags exceeds a given threshold, thus they are more time-efficient but return weaker results in comparison with the deterministic protocols that report all IDs of the missing tags. Actually, they can be used together such that a probabilistic protocol is executed in the first phase as an alarm that reports the absence of tags and then a deterministic protocol is executed to report IDs of missing tags. Unfortunately, all missing tag detection protocols except RUN~\cite{shahzad2015expecting} work on the hypothesis of a perfect environment without unexpected tags and thus fail to effectively detect missing tags in the presence of unexpected tags. Although RUN~\cite{shahzad2015expecting} is tailored for missing tag detection in the RFID systems with unexpected tags, all unexpected tags may always participate in the interrogation, which leads to the significant degradation of the performance when the unexpected tag population size scales.

Tag identification protocols~\cite{myung2006adaptive}~\cite{namboodiri2010energy}~\cite{la2011anticollision}~\cite{shahzad2013probabilistic} can identify all tags in the interrogation region. To detect missing tags, tag identification protocols can be executed to obtain the IDs of the tags present in the population and then the missing tags can be found out by comparing the collected IDs with those recorded in the database. However, they are usually time-consuming~\cite{li2010identifying} and fail to work when it is not allowed to read the IDs of tags due to privacy concern.

Tag estimation protocols~\cite{qian2011cardinality}~\cite{shahzad2012everybit}~\cite{zheng2013zoe}~\cite{chen2013understanding} are used to estimate the number of tags in the interrogation region. If many expected tags are absent in RFID systems without unexpected tags, a missing tag event may be detected by comparing the estimation and the number of expected tags stored in the database. However, the estimation error may be misinterpreted as missing tags and cause detection error, especially when there are only a few missing tags. Moreover, the estimation protocol cannot handle the case with a large number of unexpected tags.

\subsection{Proposed solution and main contributions}
Motivated by the detrimental effects of unexpected tags on the performance of missing tag detection, we devise a reliable and time-efficient protocol named \underline{B}loom filter-based \underline{m}issing \underline{t}ag \underline{d}etection protocol (BMTD). Specifically, BMTD consists of two phases, each consisting of a number of rounds.
\begin{itemize}
\item In each round of the first phase, the reader fist constructs a Bloom filter by mapping all the expected tag IDs into it such that each tag has multiple representative bits. Then the constructed Bloom filter is broadcasted to all tags. If at least one representative bit of a tag is '0's, it finds itself unexpected and will not participate in the rest of BMTD. Thus, the number of active unexpected tags is considerably reduced.
\item Subsequently, in each round of the second phase, the reader constructs a Bloom filter by aggregating the feedbacks from the remaining tags and uses it to check whether any expected tag is absent from the population.
\end{itemize}

The major contributions of this paper can be articulated as follows.
First, we propose a new solution for the important and challenging problem of missing tag detection in the presence of a large number of unexpected tags by employing Bloom filter to filter out the unexpected tags and then detect the missing tags.
Second, we perform the theoretical analysis for determining the optimal parameters used in BMTD that minimize the detection time and also meet the required reliability.
Third, we perform extensive simulations to evaluate the performance of BMTD. The results show that BMTD significantly outperforms the state-of-the-art solutions.

The remainder of the paper is organised as follows.
Section \ref{sec:related} gives a brief overview of related work.
In Section \ref{sec:model and formulation}, we formally present the missing tag detection problem and describe the design goal and requirements.
In Section \ref{sec:protocol} and \ref{sec:parameter}, we elaborate the designed protocol and perform the theoretical analysis of the parameter configuration, respectively.
In Section \ref{sec:estimation}, we introduce the method to estimate the unexpected tag population size.
Then the extensive simulations are conducted in
Section \ref{sec:simulation}.
Finally, we conclude our paper in Section~\ref{sec:conclusion}.

\section{Related Work}
\label{sec:related}

Extensive research efforts have been devoted to detecting missing tags by using probabilistic method~\cite{tan2008monitor}~\cite{luo2012probabilistic}~\cite{luo2014missing}~\cite{shahzad2015expecting} and deterministic method~\cite{li2010identifying}~\cite{zhang2011fast}~\cite{liu2015completely}. Next, we briefly review the existing solutions of missing tag detection problem.

\subsection{Probabilistic protocols}
The objective of probabilistic protocols is to detect a missing tag event with a predefined probability. Tan \textit{et al.} initiate the study of probabilistic detection and propose a solution called TRP in~\cite{tan2008monitor}. TRP can detect a missing tag event by comparing the pre-computed slots with those picked by the tags in the population. Different from our BMTD, TRP does not take into account the negative impact of unexpected tags. Follow-up works~\cite{luo2012probabilistic}~\cite{luo2014missing} employ multiple seeds to increase the probability of the singleton slot. Same to TRP, they are required to know all the tags in the population. The latest probabilistic protocol called RUN is proposed in~\cite{shahzad2015expecting}. The difference with previous works lies in that RUN considers the influence of unexpected tags and can work in the environment with unexpected tags. However, RUN does not eliminate the interference of unexpected tags fundamentally such that the false positive probability does not decrease with respect to the unexpected tag population size, which shackles the detection efficiency especially in the presence of a large number of unexpected tags. In addition, the first frame length is set to the double of the cardinality of the expected tag set in RUN, which is not established by theoretical analysis and leads to the failure of estimation method in RUN when the number of the unexpected tags is far larger than that of the expected tags.

\subsection{Deterministic protocols}
The objective of deterministic protocols is to exactly identify which tags are absent. Li \textit{et al.} develop a series of protocols in~\cite{li2010identifying} which intend to reduce the radio collision and identify a tag not in the ID level but in the bit level. Subsequently, Zhang \textit{et al.} propose another series of determine protocols in~\cite{zhang2011fast} of which the main idea is to store the bitmap of tag responses in all rounds and compare them to determine the present and absent tags. But how to configure the protocol parameters is not theoretically analyzed. More recently, Liu \textit{et al.}~\cite{liu2015completely} enhance the work by reconciling both 2-collision and 3-collision slots and filtering the empty slots such that the time efficiency can be improved. None of existing deterministic protocols, however, have been designed to work in the chaotic environment with unexpected tags.

\section{System Model and Problem Formulation}
\label{sec:model and formulation}

\subsection{System model}

Consider a large RFID system consisting of a single RFID reader and a large number of RFID tags.
The reader broadcasts the commands and collects the feedbacks from the tags. In the RFID system, the tags can be either battery-powered active ones or lightweight passive ones that are energized by radio waves emitted from the reader. In this paper, we first take account of the single-reader case and then extend the proposed protocol to the multi-reader case.

The communications between the readers and the tags follow the \textit{Listen-before-talk} mechanism~\cite{han2010counting}: A reader initiates communication first by sending commands and broadcasting the parameters to tags, such as the frame size, random seeds, and then each tag responds in its chosen time slot. Consider an arbitrary time slot, if no tag replies in this slot, it is called an \textit{empty slot}; otherwise, it is called a \textit{nonempty slot}. Only one bit is needed to distinguish an empty slot from a nonempty slot: `0' for an empty slot with an ideal channel while `1' for a nonempty slot with a busy channel.

During the communications, the tag-to-reader transmission rate and the reader-to-tag transmission rate may differ with each other and are subject to the environment. In practice, the former can be either $40$kb/s $\sim$ $640$kb/s in the FM0 encoding format or $5$kb/s $\sim$ $320$kb/s in the modulated subcarrier encoding format, while the later is normally about $26.7$kb/s $\sim$ $128$kb/s~\cite{C1G22005}.

\subsection{Problem formulation}

In the considered RFID system, we use $\mathbb{E}$ to denote the set of IDs of the expected tags which are expected to be present in a population and target tags to be monitored. In the RFID system, we assume that an unknown number of tags, $m$, out of these $|\mathbb{E}|$ tags are missing. Note that $|\cdot |$ stands for the cardinality of a set. Denote by $\mathbb{E}_r$ the set of IDs of the remaining $|\mathbb{E}|-m$ tags that are actually present in the population. Let $\mathbb{U}$ be the set of IDs of unexpected tags within the interrogation region of the reader which does not need to be monitored. The reader may neither knows exactly the IDs of unexpected tags nor does it know the cardinality of $\mathbb{U}$.

Let $M$ be a threshold on the number of missing expected tags.
We use $P_{sys}$ to denote the probability that the reader can detect a missing event. The optimum missing tag detection problem is formally defined as follows.
\begin{definition}[Optimum missing tag detection problem]
Given $|\mathbb{U}|$ unexpected tags where both $|\mathbb{U}|$ and the IDs of tags in $\mathbb{U}$ are unknown, the optimum missing tag detection problem is to devise a protocol of minimum execution time capable of detecting a missing event with probability $P_{sys} \ge \alpha$ if $m \ge M$, where $\alpha$ is the system requirement on the detection reliability.
\end{definition}

Table~\ref{Tab:notation} summaries the main notations used in the paper.
\begin{table}[!htbp]
\centering
\caption{Main Notations}
\label{Tab:notation}
\begin{tabular}{|l|l|}
\hline
\textbf{Symbols} & \textbf{Descriptions}\\
\hline
$\mathbb{E}$& set of target tags that need to be monitored \\
\hline
$\mathbb{E}_r$& tags that are actually present in the population\\
\hline
$\mathbb{U}$& set of unexpected tags\\
\hline
$\alpha$& required detection reliability\\
\hline
$m$& number of missing expected tags\\
\hline
$M$& threshold to detect missing tags\\
\hline
$P_{sys}$& prob. of detecting a missing event in BMTD\\
\hline
$J$& number of rounds in Phase 1\\
\hline
$l_j$& length of the $j$-th frame of Phase 1 \\
\hline
$k_j$& number of hash functions in the $j$-th frame of Phase 1\\
\hline
$s_j$& random seed used in the $j$-th frame of Phase 1\\
\hline
$\mathbb{U}_r$ & set of remaining active unexpected tags after Phase 1\\
\hline
$N^*$ & number of remaining active tags after Phase 1\\
\hline
$P_{1,j}$& false positive rate in the $j$-th frame of Phase 1\\
\hline
$T_1$& time cost of Phase 1\\
\hline
$W$&number of rounds in Phase 2\\
\hline
$f_w$& length of the $w$-th frame of Phase 2 \\
\hline
$R_w$& number of hash functions in the $w$-th frame of Phase 2\\
\hline
$d_w$& random seed used in the $w$-th frame of Phase 2\\
\hline
$P_{2,w}$& false positive rate in the $w$-th frame of Phase 2\\
\hline
$T_2$& time taken to execute $W$ rounds in Phase 2\\
\hline
$T$& theoretical execution time\\
\hline
$q$& prob. of detect a missing tag in a given slot of Phase 2\\
\hline
$Z$& random variable for slot of the first detection\\
\hline
$E[T_D]$& expected detection time of BMTD\\
\hline
\end{tabular}
\end{table}

\section{Bloom Filter-based Missing Tag Detection Protocol}
\label{sec:protocol}

\begin{figure*}[htbp]
\centering
\subfigure[Phase 1: unexpected tag deactivation]{
\begin{minipage}[t]{0.49\linewidth}
\centering
\includegraphics[width=0.748\textwidth]{./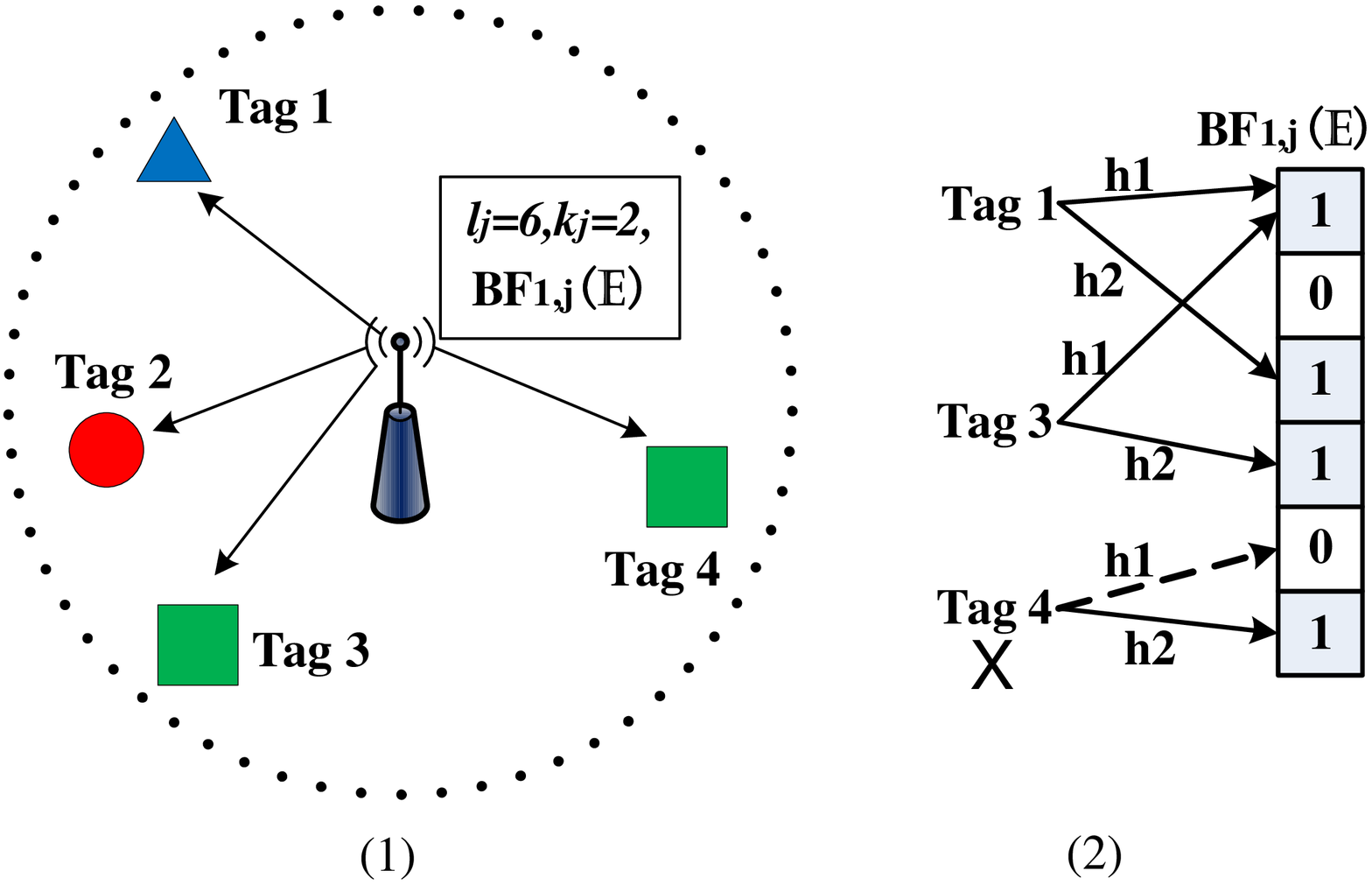}
\label{Fig:phase1}
\end{minipage}}
\subfigure[Phase 2: missing tag detection]{
\begin{minipage}[t]{0.49\linewidth}
\centering
\includegraphics[width=1\textwidth]{./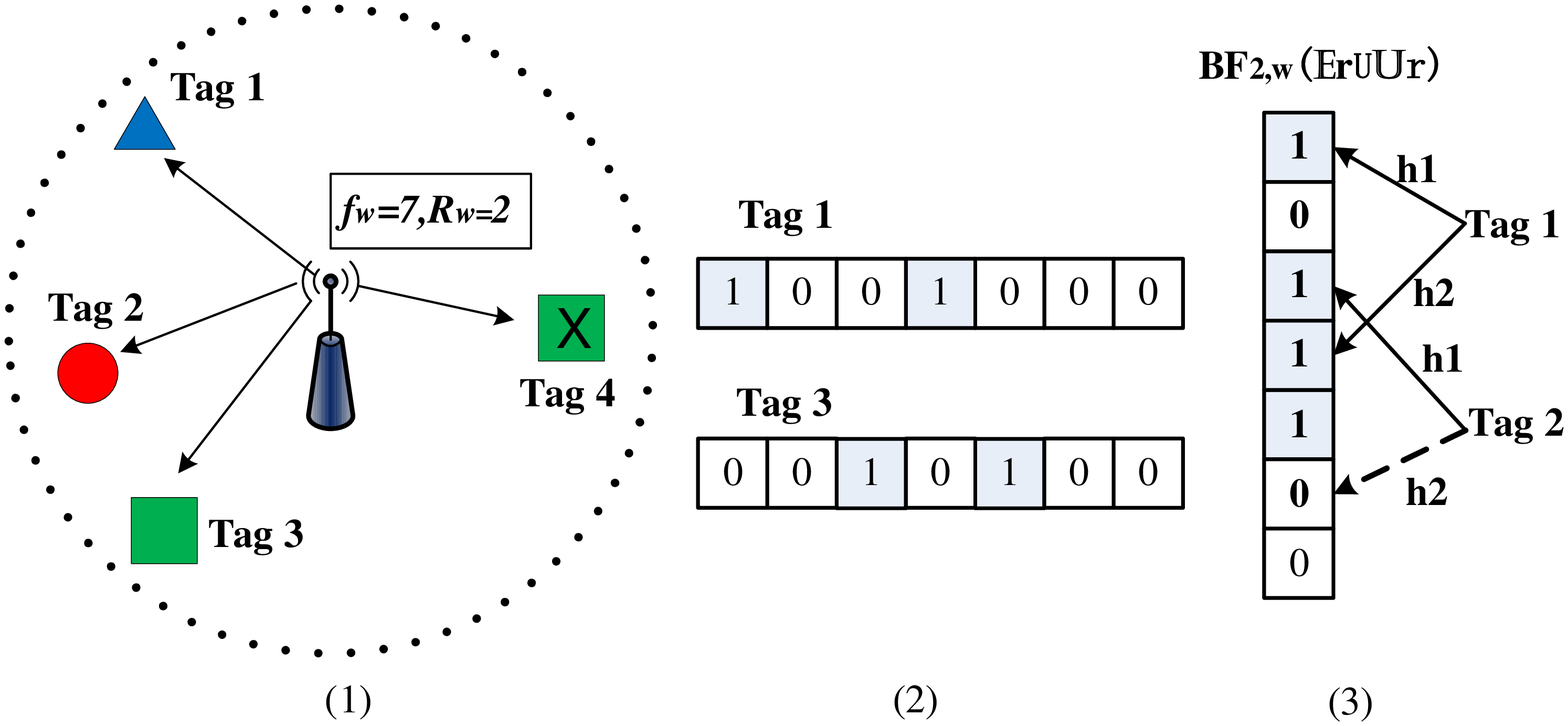}
\label{Fig:phase2}
\end{minipage}}
\caption{Example illustrating BMTD}
\end{figure*}

\subsection{Design rational and protocol overview}
To improve the time efficiency of detecting missing tags in the presence of a large number of unexpected tags in the population, we limit the interference of unexpected tags in our protocol. To achieve this goal, we employ a powerful technique called \textit{Bloom filter} which is a space-efficient probabilistic data structure for representing a set and supporting set membership queries~\cite{bloom1970space} to rule out the unexpected tags in the set $\mathbb{U}$, which efficiently reduces their interference and thus the overall execution time. Following this idea, we propose a \textit{Bloom filter-based Missing Tag Detection protocol} (BMTD), by which Bloom filters are sequentially constructed by the reader and by the feedbacks from the active tags in the RFID system.

The BMTD consists of two phases: 1) the unexpected tag deactivation phase and 2) the missing tag detection phase.
\begin{itemize}
\item The first phase is divided into $J$ rounds where the reader constructs $J$ Bloom filters by mapping the recorded IDs in the reader to deactivate the unexpected tags after identifying them.
\item The second phase is divided into $W$ rounds. The reader constructs $W$ Bloom filters according to the responses of the remaining active tags and uses the Bloom filters to detect any missing event. Our protocol either detects a missing event or reports no missing event if the reader does not detect a missing event after $W$ rounds.
\end{itemize}

We elaborate the design of the BMTD in the rest of this section.

\subsection{Phase 1: unexpected tag deactivation}

In Phase 1, we use Bloom filters to reduce the number of active unexpected tags. Specifically, in the $j$-th round of Phase 1 ($j=1,2,...,J$), the reader first constructs a Bloom filtering vector by mapping the expected tags in set $\mathbb{U}$ into an $l_j$-bit array using $k_j$ hash functions with random seed $s_j$. Here, we denote the $l_j$-bit Bloom filter vector as $BF_{1,j} (\mathbb{E})$. How the values of $l_j$, $k_j$ are chosen and how $J$ is calculated are analysed in Sec.~\ref{sec:parameter} on parameter optimisation.

Then, the reader broadcasts the $l_j$-bit Bloom filtering vector, $k_j$ and $s_j$ to all tags. Upon receiving $BF_{1,j} (\mathbb{E})$, $k_j$, and $s_j$, each tag maps its ID to $k_j$ bits pseudo-randomly at positions $h_1(ID), h_2(ID),\cdots, h_{k_j}(ID)$, and checks the corresponding positions in $BF_{1,j} (\mathbb{E})$. If all of $k_j$ bits are $1$, then the tag regards itself expected by the reader. If any of $k_j$ bits is $0$, the tag regards that it is unexpected and then remains silent in the rest of the time.

Let $\mathbb{U}_{j}$ denote the set of the remaining active unexpected tags after the $j$-th round of Phase 1, and let $\mathbb{U}_j \cap BF_{1,j} (\mathbb{E})$ denote the set of unexpected tags that pass the membership test of $BF_{1,j} (\mathbb{E})$. Since the Bloom filter has no false negatives, the set of remaining active tags can be represented as $\mathbb{E}_r \cup \mathbb{U}_{j-1} \cap BF_{1,j} (\mathbb{E})$.

After $J$ rounds when Phase 1 is terminated, the number of remaining active unexpected tags, termed as $|\mathbb{U}_{r}|$, is $|\mathbb{U}_{J} \cap BF_{1,J} (\mathbb{E})|$. The present tag population size can be written as $|\mathbb{E}_r \cup \mathbb{U}_{r}|$. Subsequently, the reader enters Phase 2.

\subsection{Phase 2: missing tag detection}
In the second phase, we still employ Bloom filter to detect a missing tag event.
Note that the parameters that the reader broadcasts in each round in Phase 2 except random seeds are identical.
In the $w$-th round of Phase 2 ($w=1,2,...,W$), the reader first broadcasts the parameters containing the Bloom filter size $f_w$, the number of hash functions $R_w$, and a new random seed $d_w$. How their values are chosen and how $W$ is calculated are analysed in Sec.~\ref{sec:parameter} on parameter optimisation.

After receiving the configuration parameters, each tag in the set $\mathbb{E}_r \cup \mathbb{U}_{r}$ selects $R_w$ slots at the indexes $h_v(ID)$ ($1 \le v \le R_w$) in the frame of $f_w$ slots and transmits a short response at each of the $R_w$ corresponding slots. As a consequence, a Bloom filter is formed in the air by the responses from the remaining active tags. In each round, there are two types of slots: empty slots and nonempty slots.

According to the responses from the tags, the reader encodes an $f_w$-bit Bloom filter as follows: If the $i$-th slot is empty, the reader sets $i$-th bit of the $f_w$-bit vector to be '0', otherwise '1'. Consequently, a virtual Bloom filter is constructed using which the reader then performs membership test. Let $BF_{2,w}(\mathbb{E}_r \cup \mathbb{U}_{r})$ denote the constructed Bloom filter in $w$-th round.

To perform membership test, the reader uses tag IDs from the expected tag set $\mathbb{E}$. Specifically, for each ID in $\mathbb{E}$, the reader maps it into $R_w$ bits at positions $h_v(ID)$ ($1 \le v \le R_w$) in $BF_{2,w}(\mathbb{E}_r \cup \mathbb{U}_{r})$. If all of them are '1's, then the tag is regarded as present. Otherwise, the tag is considered to be missing. If a missing event is detected in $w$-round, the reader terminates the protocol without executing the remaining rounds. Otherwise, the reader initiates a new round until the protocol runs $W$ rounds. If the reader does not detect a missing event after $W$ rounds, it reports no missing event, i.e., the number of missing tags $m$ is less than the threshold $M$.


\subsection{An illustrative example of BMTD}
We present an illustrative example to show the execution of BMTD. Consider an RFID system with $4$ tags. We assume that the reader needs to monitor tag $1$ and tag $2$ and thus knows their IDs, i.e., $\mathbb{E}$$=$$\{$ID1, ID2$\}$, but it is not aware of the presence of tag $3$ and tag $4$, who are unexpected, i.e., $\mathbb{U}$$=$$\{$ID3, ID4$\}$. In the example, tag $2$ is missing from the population.

As shown in (1) of Fig.~\ref{Fig:phase1}, the reader first constructs a Bloom filter $BF_{1,j} (\mathbb{E})$ by mapping IDs in $\mathbb{E}$ and broadcasts a message containing $BF_{1,j} (\mathbb{E})$ and the values of $k_j$ and $l_j$. Here we assume $J=1$, $k_j=2$ and $l_j=6$. After receiving $BF_{1,j} (\mathbb{E})$, each tag checks if it is an expected tag. As shown in (2) of Fig.~\ref{Fig:phase1}, tag $1$ finds itself expected due to the fact that both $h_1(\text{ID}1)$ and $h_2(\text{ID}1)$ are equal to $1$. However, tag $4$ realizes that it is unexpected for $h_1(\text{ID}4)=0$ and deactivates itself. Different from tag $4$, actually unexpected tag $3$ passes the test and will participate in the rest of BMTD.

As depicted in (1) of Fig.~\ref{Fig:phase2}, after the first phase, the reader starts to detect missing tags by broadcasting parameters $f_w$ and $R_w$. Here we assume $W=1$, $R_w=2$ and $f_w=7$. By using $f_w$ and $R_w$, tag $1$ and tag $3$ generate a Bloom filter vector, respectively, which is shown in (2) of Fig.~\ref{Fig:phase2}. Then they transmit following their individual Bloom filter vector. By sensing the channel, the reader can encode a Bloom filter and use it to check the IDs in $\mathbb{E}$ one by one. As shown in (3) of Fig.~\ref{Fig:phase2}, since the Bloom filter is constructed based on the responses of tag $1$ and tag $3$, tag $1$ passes the test but tag $2$ fails and is regarded as absent. 
Then the protocol reports a missing event.

\section{Performance optimisation and parameter tuning}
\label{sec:parameter}

In this section, we investigate how the parameters in the BMTD are configured to minimise the execution time while ensuring the performance requirement.

\subsection{Tuning parameters in Phase 1}

According to the property of Bloom filter, false negatives are impossible. The false positive rate of the Bloom filter $BF_{1,j} (\mathbb{E})$ in the $j$-th round in Phase 1, defined as $P_{1,j}$, can be calculated as follows~\cite{bloom1970space}:
\begin{equation}
P_{1,j} = \left[1-\left(1-\frac{1}{l_j}\right)^{|\mathbb{E}| k_j}\right]^{k_j} \approx (1-e^{-{|\mathbb{E}| k_j}/l_j})^{k_j}.
\label{eq:p_1j}
\end{equation}

By rearranging~\eqref{eq:p_1j}, we can express the Bloom filter size in the $j$-th round as
\begin{equation}
l_j = \frac{-|\mathbb{E}| k_j}{\ln (1- P^{\frac{1}{k_j}}_{1,j})}.
\end{equation}
The total time spent in this round can thus be calculated as $l_j * t_r$, where $t_r$ denotes the per bit transmission time from reader to tags.

We denote $C_j$ the cost to detect and deactivate an unexpected tag as follows:
\begin{equation}
C_j = \frac{l_j t_r}{|\mathbb{U}|(1-P_{1,j})}
= \frac{-t_r |\mathbb{E}| k_j}{|\mathbb{U}|(1-P_{1,j}) \ln (1- P^{\frac{1}{k_j}}_{1,j})}.
\end{equation}
From the expression of $C_j$, it can be noted that $C_j$ represents the average time consumed to detect and deactive an unexpected tag in the $j$-th round. In our design we minimize $C_j$ so as to achieve the optimal time-efficiency. To minimize $C_j$, we first compute the derivative of $C_j$ with respect to $k_j$ as follows:
\begin{equation}
\frac{\mathbf{d} C_j}{\mathbf{d}k_j}
= \frac{|\mathbb{E}| t_r \left(P^{\frac{1}{k_j}}_{1,j} \ln P_{1,j}
  - k_j (1-P^{\frac{1}{k_j}}_{1,j}) \ln(1-P^{\frac{1}{k_j}}_{1,j})\right)} {|\mathbb{U}|(1-P_{1,j}) k_j (1-P^{\frac{1}{k_j}}_{1,j}) \ln^2(1-P^{\frac{1}{k_j}}_{1,j})}.
\end{equation}
Furthermore, let $\frac{\mathbf{d} C_j}{\mathbf{d}k_j}=0$, we can obtain
\begin{equation}
P^{\frac{1}{k_j}}_{1,j}=\frac{1}{2},
\label{Eq:p_k}
\end{equation}
and the unique minimiser $k_j^* = \frac{-\ln P_{1,j}}{\ln 2}$ as $\frac{\mathbf{d} C_j}{\mathbf{d}k_j}>0$ when $k_j > \frac{-\ln p_{1,j}}{\ln 2}$, and $\frac{\mathbf{d} C_j}{\mathbf{d}k_j}<0$ when $k_j < \frac{-\ln p_{1,j}}{\ln 2}$. Therefore, $C_j$ reaches the minimum value when $P^{\frac{1}{k_j^*}}_{1,j}=\frac{1}{2}$. The optimum Bloom filter size, denoted as $l^*_j$, can be computed as
\begin{equation}
l^*_j = \frac{|\mathbb{E}| k_j^*}{\ln 2}.
\label{Eq:l_j}
\end{equation}
The time spent in the $j$-th round can be computed as
$\frac{|\mathbb{E}| t_r k_j^*}{\ln 2}$.
Therefore, the total execution time of Phase 1, denoted as $T_1$, can be derived as
\begin{equation}
T_1 = \sum_{j=1}^{J} \frac{|\mathbb{E}| t_r k_j^*}{\ln 2}.
\label{Eq:T1}
\end{equation}

$k_j^*$ ($1\le j\le J$), as well as $J$, are set with the parameters in Phase 2 to minimize the global execution time, as analyzed in Sec.~\ref{sec:parameter_global} and Sec.~\ref{sec:parameter_global_expected}.

Let $N^*$ be the number of tags still active after Phase 1 (i.e., $J$ rounds), it holds that
\begin{equation}
N^* = |\mathbb{E}|-m+ |\mathbb{U}_r|,
\end{equation}
where $\mathbb{U}_r$ is the set of unexpected tags still active after Phase 1.
Recall~\eqref{Eq:p_k}, the expectation of $N^*$ can be derived as
\begin{align}
E[N^*] &= |\mathbb{E}|-m+|\mathbb{U}| \prod_{j=1}^{J} {P_{1,j}} \nonumber\\
       &= |\mathbb{E}|-m+|\mathbb{U}| \big(\frac{1}{2} \big)^{\sum_{j=1}^{J} k_j^*}.
\label{Eq:E_N^*}
\end{align}

\subsection{Tuning parameters in Phase 2}

Similar to Phase 1, the false positive rate of the $w$-th round in Phase 2, defined as $P_{2,w}$, can be calculated as
\begin{equation}
P_{2,w} = \left[1-\left(1-\frac{1}{f_w}\right)^{N^* R_w}\right]^{R_w} \approx (1-e^{-{N^* {R_w}}/f_w})^{R_w}.
\label{Eq:P_2}
\end{equation}
Therefore, the Bloom filter size is
\begin{equation*}
f_w = \frac{-N^* R_w}{\ln (1- P^{\frac{1}{R_w}}_{2,w})}.
\end{equation*}

Moreover, the probability that at least one missing tag can be detected in $w$-th round, denoted as $P_{d,w}$, can be computed as
\begin{equation}
P_{d,w} = 1-{P^m_{2,w}}.
\label{Eq:P_d}
\end{equation}

Following the analysis above, the probability $P_{sys}$ that the reader is able to detect a missing event after at most $W$ rounds in Phase 2, can thus be written as
\begin{equation}
P_{sys}=1-\prod_{w=1}^{W}{(1-P_{d,w})} = 1- P^{mW}_{2,w}.
\end{equation}

It follows from the system requirement that
\begin{equation}
P_{sys}=1- P^{mW}_{2,w} = \alpha.
\label{Eq:P_system}
\end{equation}
As a result, we can obtain
\begin{equation}
f_w= \frac{-N^* R_w}{\ln (1- (1-\alpha)^{\frac{1}{m W R_w}})}.
\label{Eq:fw}
\end{equation}

In the following lemma, we derive the optimum frame size of the Bloom filter $f_w$ which is broadcast by the reader in each round of Phase 2.
\begin{lemma}
\label{Lem:Par_II}
Let $y\triangleq W R_w$, the optimum Bloom filter frame size, denoted by $f_w^*$, that achieves the detection requirement while minimising the execution time of Phase 2, is as follows:
\begin{align}
f_w^* = \frac{-N^* R_w}{\ln (1- (1-\alpha)^{\frac{1}{m y^*}})}
\label{Eq:fw_op}
\end{align}
where $y^*= \frac{\ln (1-\alpha)}{m \ln \frac{1}{2}}$.
\end{lemma}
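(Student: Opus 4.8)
The plan is to write the execution time of Phase~2 as a function of the free parameters, substitute the reliability requirement to make the problem unconstrained, and then reduce it to a single-variable optimisation solvable in closed form. First I would note that since each of the $W$ rounds uses a frame of $f_w$ slots, the time $T_2$ is, up to a fixed per-slot constant, proportional to $W f_w$; minimising $T_2$ subject to $P_{sys}=\alpha$ is therefore minimising $W f_w$ with $f_w$ no longer free but pinned down by the constraint. Substituting the constrained frame size from~\eqref{Eq:fw} gives
\[
W f_w = \frac{-N^* W R_w}{\ln\!\left(1-(1-\alpha)^{1/(mWR_w)}\right)}.
\]
The key simplifying insight is that $W$ and $R_w$ enter only through their product, so with $y\triangleq W R_w$ the objective collapses to a function of the single variable $y$,
\[
g(y) = \frac{-N^* y}{\ln\!\left(1-(1-\alpha)^{1/(my)}\right)}.
\]

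Next I would perform the change of variable $t\triangleq(1-\alpha)^{1/(my)}$, which is a strictly increasing bijection from $y\in(0,\infty)$ onto $t\in(0,1)$ because $0<1-\alpha<1$. From $\ln t=\frac{\ln(1-\alpha)}{my}$ we get $y=\frac{\ln(1-\alpha)}{m\ln t}$, and since $1-(1-\alpha)^{1/(my)}=1-t$ this yields the clean form
\[
g = \frac{-N^*\ln(1-\alpha)}{m\,\ln t\,\ln(1-t)}.
\]
As $N^*>0$, $\ln(1-\alpha)<0$, and both $\ln t$ and $\ln(1-t)$ are negative on $(0,1)$, the expression is positive, so minimising $g$ is equivalent to maximising the denominator factor $h(t)\triangleq\ln t\,\ln(1-t)$.

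Finally I would maximise $h$ over $(0,1)$. Setting $h'(t)=\frac{\ln(1-t)}{t}-\frac{\ln t}{1-t}=0$ gives $t=\tfrac12$ as an interior critical point, and substituting $t^*=\tfrac12$ back through $y=\frac{\ln(1-\alpha)}{m\ln t}$ recovers $y^*=\frac{\ln(1-\alpha)}{m\ln\frac12}$ and, via the constrained frame-size expression, $f_w^*$ exactly as in~\eqref{Eq:fw_op}. The main obstacle is not the algebra but justifying that this critical point is a genuine global maximum rather than a minimum or saddle: I would establish it using the symmetry $h(t)=h(1-t)$ together with the fact that $h$ vanishes at both endpoints while staying strictly positive in the interior, which forces the unique interior extremum at the symmetry point $t=\tfrac12$ to be the maximiser. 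An equivalent and perhaps cleaner route is to set $a=-\ln t$ and $b=-\ln(1-t)$ and maximise $ab$ under the constraint $e^{-a}+e^{-b}=1$, where symmetry makes the optimum $a=b=\ln2$ transparent and again delivers $t^*=\tfrac12$.
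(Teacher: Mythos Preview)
Your proof is correct and reaches the same conclusion as the paper, but the optimisation step differs. The paper's proof writes the total Phase~2 length $f=\sum_w f_w$ as a function of $y=WR_w$, differentiates $f$ directly with respect to $y$, sets the derivative to zero to obtain $y^*$, and then checks the sign of $\frac{\mathbf{d}f}{\mathbf{d}y}$ on each side of $y^*$ to confirm a minimum. Your route instead introduces the substitution $t=(1-\alpha)^{1/(my)}$, which collapses the objective to the symmetric form $g\propto 1/\big(\ln t\,\ln(1-t)\big)$ and reduces the problem to maximising $h(t)=\ln t\,\ln(1-t)$ on $(0,1)$; the invariance $h(t)=h(1-t)$ together with the vanishing of $h$ at the endpoints then makes the maximiser $t^*=\tfrac12$ transparent. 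The paper's approach is more mechanical and requires no clever change of variable, while yours explains \emph{why} the half-fill condition $(1-\alpha)^{1/(my^*)}=\tfrac12$ appears and connects directly to the classical Bloom-filter optimality; it also sidesteps the somewhat heavy derivative expression the paper has to manipulate. One small caveat: your symmetry-plus-boundary argument guarantees at least one interior maximum at the symmetry point but does not by itself rule out additional critical points; for full rigor you would still want either the first-derivative sign analysis (as the paper does) or a brief convexity argument on $h$ near $t=\tfrac12$, though at the level of this paper your argument is entirely adequate.
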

\begin{proof}
Denote by $f$ the total length of all $W$ Bloom filters in the second phase, we thus have
\begin{equation}
f = {\sum_{w=1}^{W}{f_w}}
  = \frac{-N^* W R_w}{\ln (1- (1-\alpha)^{\frac{1}{m W R_w}})}.
\label{Eq:f_M_alpha}
\end{equation}
It can be checked that $f$ depends on the product of $W$ and $R_w$ which is the total number of hash functions used in Phase 2.
To minimize the execution time, let $y\triangleq W R_w$, we first calculate the derivation of $f$ with respect to $y$ as follows:
\begin{multline*}
\frac{\mathbf{d}f}{\mathbf{d}y}
= \frac{ N^* (1-\alpha)^{\frac{1}{m y}} \ln(1-\alpha)}{m y (1-(1-\alpha)^{\frac{1}{m y}}) \ln^2 (1-(1-\alpha)^{\frac{1}{m y}})} \\
- \frac{N^*}{\ln(1-(1-\alpha)^{\frac{1}{m y}})}.
\end{multline*}
Imposing $\frac{\mathbf{d}f}{\mathbf{d}y}=0$ yields
\begin{equation*}
y= \frac{\ln (1-\alpha)}{m \ln \frac{1}{2}}.
\end{equation*}
Moreover, when $y<\frac{\ln (1-\alpha)}{m \ln \frac{1}{2}}$, it holds that $\frac{\mathbf{d}f}{\mathbf{d}y}<0$; when $y>\frac{\ln (1-\alpha)}{m \ln \frac{1}{2}}$, it holds that $\frac{\mathbf{d}f}{\mathbf{d}y}>0$. Therefore, $f$ achieves the minimum at $y^*= \frac{\ln (1-\alpha)}{m \ln \frac{1}{2}}$. The minimum of $f_w$, denoted by $f_w^*$ can be computed by injecting $y=y^*$ into~\eqref{Eq:fw}. The proof is thus completed.
\end{proof}

\begin{remark}

As the reader does not have prior knowledge on $m$, the number of missing tags, in the design of BMTD, we require that the detection performance requirement to be hold for any $m\ge M$. Hence, $f_w^*$ and $y^*$ are as follows:
\begin{eqnarray}
f_w^* = \frac{-N^* R_w}{\ln (1- (1-\alpha)^{\frac{1}{M y^*}})} \label{Eq:f_w_op}, \\
\text{ where } y^*= \frac{\ln (1-\alpha)}{M \ln \frac{1}{2}} \label{Eq:y_op},
\end{eqnarray}
where we use $m=M$ in $N^*$ and $y^*$, which is the hardest case.
Since $N^* = |\mathbb{E}|-m+ |\mathbb{U}_r|$, it can be checked that the detection probability $P_{sys}$ is monotonically increasing and $P_{2,w}$ is monotonically decreasing with respect to the number of missing tags $m$, meaning that $m=M$ makes the detection hardest and any greater $m$ will ease the hardness, it is thus reasonable to use $m=M$ in the rest of the analysis, because if the reader can detect a missing tag event with probability $\alpha$ when $m=M$, it will fulfill the detection with probability $P_{sys}>\alpha$ when $m>M$.

In addition, since $y^*$ is the total number of hash functions used in Phase 2 and at least one round is executed so as to detect a missing event, $y^*$ needs to be a positive integer. Therefore, we set $y^*$$=$$\lceil\frac{\ln (1-\alpha)}{M \ln \frac{1}{2}}\rceil$, which guarantees the required detection performance requirement.
Note that $R_w$ and $W$ can be set as arbitrary positive integers.
\end{remark}

Under the optimum parameter setting derived above, we can calculate the time needed to execute $W$ rounds of Phase 2, denoted by $T_2$, as follows:
\begin{equation}
T_2 = \frac{-t_t N^* y^*}{\ln (1- (1-\alpha)^{\frac{1}{M y^*}})},
\end{equation}
where $t_t$ is the time needed by the tags to transmit one bit to the reader. $T_2$ sets an upper-bound on the execution time of Phase 2.


\subsection{Tuning $k_j^*$ and $J$ to minimize worst-case execution time}
\label{sec:parameter_global}

In this subsection, we study how to set $k^*_j$ and $J$ to minimize the worst-case execution time, which corresponds to the experience of the execution time where no missing event is detected and hence all the $W$ rounds in the second round need to be executed. We denote the worst-case execution time by $T$. In the following theorem, we derive the minimiser of $\mathbb{E}[T]$.


\begin{theorem}
\label{Lem:Par_I}
Denote $x \triangleq \sum_{j=1}^{J} k^*_j$, $x$ need to be set to $x^*$ as follows to minimise the worst-case execution time of the BMTD:
\begin{align}
x^* =
\begin{cases}
0 & |\mathbb{U}| \le U_0  \\
\frac{\ln \frac{-t_r |\mathbb{E}| \ln(1-(1-\alpha^{\frac{1}{My^*}}))}{t_t y^* |\mathbb{U}| \ln^2 2}}{-\ln 2} & |\mathbb{U}| > U_0
\end{cases},
\label{Eq:x*}
\end{align}
where $U_0\triangleq \frac{|\mathbb{E}| t_r \ln(1-(1-\alpha)^{\frac{1}{M y^*}})}{-t_t y^* \ln^2 2}$. That is, in regard to minimise the worst-case execution time, when the number of unexpected tags does not exceed a threshold $U_0$, Phase 1 is not executed, otherwise Phase 1 is executed with the parameters $k^*_j$ and $J$ set to $\sum_{j=1}^{J} k^*_j=x^*$.
\end{theorem}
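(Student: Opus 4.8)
The plan is to reduce the entire worst-case timing to a scalar optimisation in $x\triangleq\sum_{j=1}^J k_j^*$ and then solve it by one-dimensional convexity. First I would write the worst-case (no-detection) expected time as $E[T]=E[T_1]+E[T_2]$, where all $W$ rounds of Phase 2 are executed. By~\eqref{Eq:T1} the first phase contributes $E[T_1]=\frac{|\mathbb{E}| t_r}{\ln 2}\,x$, which already depends on the per-round hash counts only through their sum $x$. For the second phase, $T_2$ is linear in $N^*$, so linearity of expectation gives $E[T_2]=B\,E[N^*]$ with the constant $B\triangleq \frac{-t_t y^*}{\ln(1-(1-\alpha)^{1/(My^*)})}$, which is positive because the inner logarithm is negative. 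Substituting $E[N^*]=|\mathbb{E}|-M+|\mathbb{U}|\,2^{-x}$ from~\eqref{Eq:E_N^*} with the hardest case $m=M$ (justified in the Remark) makes Phase 2 depend on $x$ only through $2^{-x}$. The structural point I want to isolate is that both phases collapse onto the single decision variable $x\ge 0$, so that neither $J$ nor the individual $k_j^*$ matter beyond their sum; I would then record the scalar objective $g(x)\triangleq \frac{|\mathbb{E}| t_r}{\ln 2}x+B\big(|\mathbb{E}|-M+|\mathbb{U}|\,2^{-x}\big)$.

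Next I would minimise $g$ over $x\ge 0$. Since $g''(x)=B|\mathbb{U}|(\ln 2)^2 2^{-x}>0$, the objective is strictly convex, so any stationary point is the unique global minimiser. Imposing $g'(x)=\frac{|\mathbb{E}| t_r}{\ln 2}-B|\mathbb{U}|(\ln 2)\,2^{-x}=0$ yields $2^{-\bar x}=\frac{|\mathbb{E}| t_r}{B|\mathbb{U}|\ln^2 2}$, hence $\bar x=\frac{1}{-\ln 2}\ln\frac{|\mathbb{E}| t_r}{B|\mathbb{U}|\ln^2 2}$. Substituting the value of $B$ and simplifying (multiplying numerator and denominator by $-1$) reproduces exactly the interior branch of~\eqref{Eq:x*}.

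Finally I would enforce the feasibility constraint $x\ge 0$. Because $g$ is strictly convex on $[0,\infty)$, its constrained minimiser is $\max(\bar x,0)$: when $\bar x\le 0$ we have $g'(0)>g'(\bar x)=0$, so $g$ is increasing on $[0,\infty)$ and the optimum is the boundary $x^*=0$, i.e.\ Phase 1 is skipped; otherwise $\bar x$ is feasible and optimal. Translating the boundary through the monotone relation above, $\bar x\le 0\iff 2^{-\bar x}\ge 1\iff |\mathbb{U}|\le \frac{|\mathbb{E}| t_r}{B\ln^2 2}$, and a short substitution of $B$ shows $\frac{|\mathbb{E}| t_r}{B\ln^2 2}$ equals precisely the stated threshold $U_0$. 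This delivers the two-case formula for $x^*$.

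The routine calculus here (the two derivatives and the algebraic rearrangement into the stated closed form) is not the difficult part. The steps that need real care are the sign bookkeeping of the logarithms that keeps both $B$ and $U_0$ positive, the linearity-of-expectation step that lets me replace the random $N^*$ by $E[N^*]$ inside $T_2$, and above all the collapse to the single variable $x$, which is what legitimises the constrained one-dimensional convex argument and explains why only $\sum_j k_j^*$ appears in the optimum rather than $J$ or the individual $k_j^*$. I would close by noting that a (generally non-integer) $x^*$ is then realised by distributing it over any number of rounds with $\sum_{j}k_j^*=x^*$, subject to the usual integer rounding.
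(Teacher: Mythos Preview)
Your proposal is correct and follows essentially the same route as the paper: both write the worst-case expected time as a function of the single scalar $x=\sum_j k_j^*$, differentiate, locate the stationary point, and then project onto the feasibility constraint $x\ge 0$ to obtain the two-case formula with threshold $U_0$. Your version is slightly more careful in making the convexity explicit via $g''(x)>0$ and in articulating the linearity-of-expectation step that replaces $N^*$ by $E[N^*]$, whereas the paper argues the boundary case directly from the sign of $g'(0)$; these are cosmetic differences rather than a distinct approach.
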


\begin{proof}
Recall the two phases of BMTD and~\eqref{Eq:T1}, we can derive the expectation of $T$ as follows:
\begin{align}
&\mathbb{E}[T]
= T_1 + T_2
= \sum_{j=1}^{J} \frac{|\mathbb{E}| t_r k^*_j}{\ln 2} + \frac{-t_t y^* E[N^*]}{\ln (1- (1-\alpha)^{\frac{1}{M y^*}})} \nonumber\\
&= \frac{|\mathbb{E}| t_r}{\ln 2} \sum_{j=1}^{J} k^*_j
  + \frac{-t_t y^* \left(|\mathbb{E}|-M+|\mathbb{U}| \big(\frac{1}{2} \big)^{\sum_{j=1}^{J} k_j} \right)}{\ln (1- (1-\alpha)^{\frac{1}{M y^*}})}.
\label{Eq:ET}
\end{align}
From~\eqref{Eq:ET}, it can be noted that $E[T]$ is a function of $x=\sum_{j=1}^{J} k^*_j$. We then calculate the optimum $x^*$ that minimizes $E[T]$.
To that end, we compute the derivation of $E[T]$ with respect to $x$:
\begin{equation}
\frac{\mathbf{d}E[T]}{\mathbf{d}x}
= \frac{|\mathbb{E}| t_r}{\ln 2}+ \frac{t_t y^* |\mathbb{U}|\ln2}{\ln(1-(1-\alpha)^{\frac{1}{M y^*}})} \big(\frac{1}{2}\big)^{x}.
\end{equation}
Since $\big(\frac{1}{2}\big)^{x} \le 1$, it thus holds for all $x \ge 0$ that $\frac{\mathbf{d}E[T]}{\mathbf{d}x} \ge 0$ if $\frac{|\mathbb{E}| t_r}{\ln 2}+ \frac{t_t y^* |\mathbb{U}|\ln2}{\ln(1-(1-\alpha)^{\frac{1}{M y^*}})} \ge 0$, i.e.,
\begin{equation}
|\mathbb{U}| \le \frac{|\mathbb{E}| t_r \ln(1-(1-\alpha)^{\frac{1}{M y^*}})}{-t_t y^* \ln^2 2}=U_0.
\end{equation}
It is worth noticing that $E[T]$ is a monotonic nondecreasing function in this case with respect to $x$, we thus set $x=0$ to minimize the execution time, which means that if the number of unexpected tags is smaller than the threshold $U_0$, we should remove the Phase 1 and only execute Phase 2.

In contrast, if $|\mathbb{U}| > U_0$, $\frac{\mathbf{d}E[T]}{\mathbf{d}x}$ can be negative, zero, or positive.
Setting $\frac{\mathbf{d}E[T]}{\mathbf{d}x} =0$, the optimal value of $x$ to minimise $E[T]$, defined as $x^*$, can be calculated as
\begin{equation*}
x^* = \frac{\ln \frac{-t_r |\mathbb{E}| \ln(1-(1-\alpha)^{\frac{1}{My^*}})}{t_t y^* |\mathbb{U}| \ln^2 2}}{-\ln 2}.
\end{equation*}
\end{proof}

\begin{remark}
Since $x^*$ represents the total number of hash functions used in Phase 1, it needs to be a non-negative integer. Therefore, we set $x^*$ either to its ceiling or floor integer depending on which one leads to a smaller $E[T]$.
The parameters $k^*_j$ and $J$ are set such that $\sum_{j=1}^{J}k^*_j = x^*$.
\end{remark}
%

\subsection{Tuning $k_j^*$ and $J$ to minimize expected detection time}
\label{sec:parameter_global_expected}
The parameters derived in Theorem~\ref{Lem:Par_I} establish that the BMTD is able to detect a missing event with probability equal to or greater than the system requirement $\alpha$ after $W$ rounds of Phase 2. However, in many practical scenarios, the missing event may be detected in the round $w<W$ when the algorithm can be terminated. In this subsection, we derive the parameter configuration (i.e., $k^*_j$ and $J$) that minimises the expected detection time.
To that end, we first calculate the probability that at least one of the missing tags can be detected for the first time in a given slot and use it to formulate the expectation of the missing event detection time.

\begin{lemma}
\label{Lem:Prob_de}
The probability that a missing tag can be detected in a given slot of Phase 1, denoted by $q$, is as follows:
\begin{equation}
q = \left( 1-(1-(1-\alpha)^{\frac{1}{y^*M}})^\frac{M}{N^*} \right) \cdot \left(1-(1-\alpha)^{\frac{1}{y^*M}}\right).
\end{equation}
A loose lower-bound for $q$, denoted as $q_{min}$, can be established as follows:
\begin{equation}
q_{min}=\big(1-(\frac{1}{2})^\frac{M}{|\mathbb{E}|-M+|\mathbb{U}|}\big)(1-(1-\alpha)^{\frac{1}{y^*M}}).
\label{eq:q_min}
\end{equation}
\end{lemma}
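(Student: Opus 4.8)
The plan is to compute $q$ directly as the probability that a fixed slot of a Phase~2 round simultaneously (i) is empty and (ii) is a query position of at least one of the $M$ missing tags; whenever both hold, the reader, upon testing that missing tag against the constructed filter $BF_{2,w}$, reads a `0' at this slot and declares the tag absent, so this is precisely the per-slot detection event. First I would set $p \triangleq (1-\alpha)^{1/(y^*M)}$ and record the two identities that drive everything. Under the optimal Phase~2 setting, \eqref{Eq:P_system} gives $P_{2,w}=(1-\alpha)^{1/(MW)}$, and since $y^*=WR_w$ the standard Bloom filter identity $P_{2,w}^{1/R_w}=1-(1-1/f_w)^{N^*R_w}\approx 1-e^{-N^*R_w/f_w}$ yields $p=1-e^{-N^*R_w/f_w}$. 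Hence $p$ is exactly the probability that a given slot is nonempty, so $1-p=e^{-N^*R_w/f_w}$ is the probability that it is empty, which is the second factor of the claimed $q$.

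Next I would handle factor (ii). Each of the $N^*$ active tags leaves a given slot untouched with probability $(1-1/f_w)^{R_w}\approx e^{-R_w/f_w}$, and since the emptiness probability $1-p=(1-1/f_w)^{N^*R_w}$ is this quantity raised to the power $N^*$, I get $e^{-R_w/f_w}=(1-p)^{1/N^*}$; thus a single missing tag selects that slot with probability $1-(1-p)^{1/N^*}$. Treating the $M$ missing tags' hashes as independent, the probability that none selects the slot is $(1-p)^{M/N^*}$, so at least one does with probability $1-(1-p)^{M/N^*}$. The emptiness event depends only on the hashes of the present (active) tags while the selection event depends only on the hashes of the missing tags, so the two events are independent; multiplying the two factors yields the stated closed form for $q$.

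For the lower bound I would keep the emptiness factor $1-p=1-(1-\alpha)^{1/(y^*M)}$ intact and relax only the first factor. Since $\mathbb{U}_r\subseteq\mathbb{U}$ we have $N^*=|\mathbb{E}|-M+|\mathbb{U}_r|\le|\mathbb{E}|-M+|\mathbb{U}|$, hence $M/N^*\ge M/(|\mathbb{E}|-M+|\mathbb{U}|)$. Because $y^*$ is taken to be the integer $\lceil \cdot \rceil$, the exponent $1/(y^*M)$ only shrinks, forcing $p\ge\tfrac{1}{2}$, i.e. $1-p\le\tfrac{1}{2}$; combining the two monotonicities gives $(1-p)^{M/N^*}\le(\tfrac{1}{2})^{M/N^*}\le(\tfrac{1}{2})^{M/(|\mathbb{E}|-M+|\mathbb{U}|)}$, so $1-(1-p)^{M/N^*}\ge 1-(\tfrac{1}{2})^{M/(|\mathbb{E}|-M+|\mathbb{U}|)}$. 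Multiplying by the positive factor $1-p$ then establishes $q\ge q_{min}$.

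The computation is otherwise routine Bloom filter bookkeeping, so the one point needing genuine care is the independence argument in factor (ii): the emptiness of a slot is governed by the transmissions of the $N^*$ present tags, whereas a missing tag's query positions are computed by the reader from its ID, so the two are statistically independent and the probabilities may be multiplied. I would also be careful about the direction of the inequalities in the lower bound, where both the base relaxation ($1-p\le\tfrac{1}{2}$, valid because $y^*$ is rounded up) and the exponent relaxation ($N^*\le|\mathbb{E}|-M+|\mathbb{U}|$) must push $(1-p)^{M/N^*}$ upward so that $1-(1-p)^{M/N^*}$ is pushed downward.
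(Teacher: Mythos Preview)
Your proposal is correct and follows essentially the same approach as the paper: you decompose $q$ into the product of the probability that at least one of the $M$ missing tags hashes to the slot and the probability that none of the $N^*$ present tags does, then simplify via the optimal-parameter identity $e^{-N^*R_w/f_w^*}=1-(1-\alpha)^{1/(My^*)}$, and for the lower bound you relax $N^*\le|\mathbb{E}|-M+|\mathbb{U}|$ together with $1-p\le\tfrac12$ from the ceiling on $y^*$. Your treatment is in fact slightly more careful than the paper's, since you make explicit the independence of the two events (one determined by the present tags' hashes, the other by the missing tags') and you verify the direction of both monotonicities in the lower-bound chain, points the paper glosses over.
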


\begin{proof}
A missing tag can be detected in a given slot only when at least one missing tag is hashed to this slot and no tag in $\mathbb{E}_r \cup \mathbb{U}_r$ selects the same location. Consider the hardest case for detecting a missing tag event, i.e., $m=M$, the probability that at least one missing tag maps to the given slot can be given by $\left( 1- (1-\frac{1}{f^*_w})^{M R_w} \right)$. The probability that no tag in $\mathbb{E}_r \cup \mathbb{U}_r$ maps to that slot is equal to $(1-\frac{1}{f^*_w})^{N^* R_w}$. Consequently, multiplying the former by the later leads to $q$, i.e.:
\begin{align*}
q &= \left( 1- (1-\frac{1}{f^*_w})^{M R_w} \right)\cdot(1-\frac{1}{f^*_w})^{N^* R_w} \\
  &\approx (1-e^{-\frac{MR_w}{f^*_w}})\cdot e^{-\frac{N^*R_w}{f^*_w}} \\
  &= \left( 1-(1-(1-\alpha)^{\frac{1}{y^*M}})^\frac{M}{N^*} \right) \cdot (1-(1-\alpha)^{\frac{1}{y^*M}}).
\end{align*}
We then derive the lower-bound $q_{min}$. To that end, noticing that $q$ is negatively correlated with $N^*$ which falls into the range $\big[|\mathbb{E}|-M,|\mathbb{E}|-M+|\mathbb{U}|\big]$, we have
$$q\ge \left( 1-(1-(1-\alpha)^{\frac{1}{y^*M}})^\frac{M}{|\mathbb{E}|-M+|\mathbb{U}|} \right) \cdot (1-(1-\alpha)^{\frac{1}{y^*M}}).$$
On the other hand, noticing that $y^*=\lceil{\frac{\ln (1-\alpha)}{M\ln \frac{1}{2}}\rceil}\ge \frac{\ln (1-\alpha)}{M\ln \frac{1}{2}}$, we have $q\ge q_{min}=\big(1-(\frac{1}{2})^\frac{M}{|\mathbb{E}|-M+|\mathbb{U}|}\big)(1-(1-\alpha)^{\frac{1}{y^*M}})$.
\end{proof}

After calculating $q$, we next derive the expected missing event detection time, denoted by $\mathbb{E}[T_D]$.

\begin{theorem}
\label{Th:T_Z}
The expected missing event detection time $\mathbb{E}[T_D]$ is given by the following equation:
\begin{align}
\label{Eq:E[T_D]}
&\mathbb{E}[T_D]
= \frac{|\mathbb{E}| t_r x}{\ln 2} + t_t \sum_{N^*=|\mathbb{E}|-M}^{|\mathbb{E}|-M+|\mathbb{U}|}
    \frac{1-(1-q)^{f}-fq(1-q)^f}{q} \nonumber\\
         & \binom{|\mathbb{U}|}{N^*-|\mathbb{E}|+M}
          \Big(\frac{1}{2^{x}}\Big)^{N^*-|\mathbb{E}|+M} \Big(1-\frac{1}{2^{x}}\Big)^{|\mathbb{U}|-N^*+|\mathbb{E}|-M}.
\end{align}
\end{theorem}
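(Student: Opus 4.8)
The plan is to decompose the detection time $T_D$ into a deterministic Phase 1 contribution plus a random Phase 2 contribution, and then to remove the randomness by the law of total expectation over $N^*$, the number of tags surviving Phase 1. Since the total number of hash functions $x=\sum_{j=1}^{J}k^*_j$ is fixed once the parameters are tuned, Phase 1 always runs to completion and, by~\eqref{Eq:T1}, contributes the deterministic time $\frac{|\mathbb{E}|t_r x}{\ln 2}$. This is exactly the first term of~\eqref{Eq:E[T_D]}, so the remaining task is to compute the expected Phase 2 time.

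Next I would condition on $N^*$, the number of active tags entering Phase 2. Given $N^*$, Lemma~\ref{Lem:Prob_de} supplies the per-slot detection probability $q=q(N^*)$, and the total number of Phase 2 slots is $f=f(N^*)=Wf^*_w$. I would then model the $f$ slots as independent Bernoulli$(q)$ trials, consistent with the Bloom-filter independence approximation already used in~\eqref{eq:p_1j} and~\eqref{Eq:P_2}, and let $Z$ be the index of the first slot in which a missing tag is detected. The protocol terminates at slot $Z$ whenever $Z\le f$, so the Phase 2 time is $t_t Z$, and its conditional contribution on the detection event is $t_t\sum_{z=1}^{f}z\,q(1-q)^{z-1}$. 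The crux of this step is evaluating the truncated sum in closed form: differentiating the finite geometric series $\sum_{z=0}^{f}(1-q)^{z}$ with respect to the common ratio $1-q$ (or summing by parts) yields
\begin{equation*}
\sum_{z=1}^{f}z\,q(1-q)^{z-1}=\frac{1-(1-q)^{f}-fq(1-q)^{f}}{q},
\end{equation*}
which is precisely the bracketed factor in~\eqref{Eq:E[T_D]}. I would note that this is the partial (defective) expectation of $Z$ restricted to $\{Z\le f\}$, so the terminal event in which no missing tag is ever detected does not contribute a detection time.

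Finally I would uncondition on $N^*$. By~\eqref{Eq:E_N^*} each of the $|\mathbb{U}|$ unexpected tags survives all $J$ rounds of Phase 1 independently with probability $\prod_{j=1}^{J}P_{1,j}=(\tfrac{1}{2})^{x}$, so the number of survivors $|\mathbb{U}_r|=N^*-(|\mathbb{E}|-M)$ is Binomial$(|\mathbb{U}|,\,1/2^{x})$, with
\begin{equation*}
\Pr[N^*]=\binom{|\mathbb{U}|}{N^*-|\mathbb{E}|+M}\Big(\tfrac{1}{2^{x}}\Big)^{N^*-|\mathbb{E}|+M}\Big(1-\tfrac{1}{2^{x}}\Big)^{|\mathbb{U}|-N^*+|\mathbb{E}|-M}.
\end{equation*}
Multiplying the conditional Phase 2 time by $t_t$, weighting by $\Pr[N^*]$, summing over $N^*\in[\,|\mathbb{E}|-M,\ |\mathbb{E}|-M+|\mathbb{U}|\,]$, and adding the Phase 1 term gives exactly~\eqref{Eq:E[T_D]}.

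The main obstacle I anticipate is not the geometric-sum identity, which is routine, but the unconditioning step: because both the per-slot probability $q$ and the horizon $f$ are functions of the random variable $N^*$, one cannot substitute $\mathbb{E}[N^*]$ and must instead carry the full binomial law of the Phase 1 survivors through the conditional expectation. The clean binomial-weighted form therefore rests on justifying two independence assumptions — the per-tag survival events in Phase 1 and the per-slot detection events in Phase 2 — both of which are the same approximations already adopted in the Bloom-filter false-positive analysis.
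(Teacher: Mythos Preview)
Your proposal is correct and follows essentially the same route as the paper: decompose $T_D$ into the deterministic Phase~1 cost $T_1=\frac{|\mathbb{E}|t_r x}{\ln 2}$ plus $t_t E[Z]$, compute $E[Z\mid N^*]$ as the truncated geometric sum $\sum_{z=1}^{f}z\,q(1-q)^{z-1}$, and then uncondition via the binomial law of $|\mathbb{U}_r|$ with success probability $2^{-x}$. Your remark that the inner sum is a defective expectation (the no-detection event contributes nothing) is a clarification the paper leaves implicit, but otherwise the arguments coincide.
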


\begin{proof}
Recall~\eqref{Eq:f_M_alpha}, it holds that there are  $f=\frac{-N^* y^*}{\ln (1- (1-\alpha)^{\frac{1}{M y^*}})}$ slots in Phase 2. We next calculate the number of slots before detecting the first missing tag.
It is easy to check that the event that in slot $z$ the reader detects the first missing tag happens if no missing tags is detected in the first $z-1$ slots while at least one missing tag is detected in slot $z$. Let $Z$ denote the random variable of $z$, we have
\begin{equation}
P\{Z=z\} = (1-q)^{z-1}*q,
\label{eq:z}
\end{equation}
which is geometrically distributed.

We can then compute the expectation of $Z$, conditioned by $N^*$, as follows:
\begin{align}
E[Z|N^*] &= \sum_{z=1}^{f} z \cdot P\{Z=z\}  \nonumber \\
&= \frac{1-(1-q)^{f}-fq(1-q)^f}{q}.
\label{eq:e_zn}
\end{align}

Moreover, it follows from the analysis of Phase 1 that the probability that an unexpected tag is still active after Phase 1 is $\prod_{j=1}^{J} {P_{1,j}}$.
On the other hand, since $\mathbb{U}_r$ represents the ID set of active unknown tags after Phase 1, recall~\eqref{Eq:p_k} and $\sum_{j=1}^{J} k^*_j =x$, we can compute the probability of having $u$ active unexpected tags after Phase 1 as follows:
\begin{align}
P\{|\mathbb{U}_r| = u\}
&= \binom{|\mathbb{U}|}{u} \Big(\prod_{j=1}^{J} {P_{1,j}}\Big)^u \Big(1-\prod_{j=1}^{J}
     {P_{1,j}}\Big)^{|\mathbb{U}|-u} \nonumber\\
&= \binom{|\mathbb{U}|}{u} \Big(\frac{1}{2^{x}}\Big)^{u} \Big(1-\frac{1}{2^{x}}\Big)^{|\mathbb{U}|-u}.
\end{align}
It can be noted that $|\mathbb{U}_r|$ follows the binomial distribution.
Recall the relationship between $N^*$ and $|\mathbb{U}_r|$ in~\eqref{Eq:T1}, it holds that
\begin{align}
E[Z] =& \sum_{N^*=|\mathbb{E}|-M}^{|\mathbb{E}|-M+|\mathbb{U}|} E[Z|N^*]
         \binom{|\mathbb{U}|}{N^*-|\mathbb{E}|+M} \cdot \nonumber\\
      &    \Big(\frac{1}{2^{x}}\Big)^{N^*-|\mathbb{E}|+M}
    \cdot \Big(1-\frac{1}{2^{x}}\Big)^{|\mathbb{U}|-N^*+|\mathbb{E}|-M}
    \label{eq:e_z}
\end{align}

Therefore, $E[T_D]$ can be derived as
\begin{align}
E[T_D]
= T_1 + E[Z] \cdot t_t
= \frac{|\mathbb{E}| t_r x}{\ln 2} + E[Z] \cdot t_t.
\label{eq:t_d}
\end{align}
Injecting $E[Z]$ into $E[T_D]$ completes the proof.
\end{proof}



After deriving $E[T_D]$ as a function of $x$, we seek the optimum, denoted by $x^*_e$, which minimizes $E[T_D]$. To this end, we first establish an upper-bound of $x^*_e$ in the following lemma.

\begin{lemma}
\label{Lem:x_opt_upp}
It holds that $x^*_e \le \frac{2t_t \ln2}{t_r |\mathbb{E}| q_{min}}$.
\end{lemma}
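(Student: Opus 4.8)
The plan is to locate the minimiser $x_e^*$ by the standard device that a minimiser of $E[T_D]$ over the feasible domain $x\ge 0$ cannot sit in a region where the objective already exceeds its value at a convenient reference point, which here I take to be $x=0$ (i.e.\ skipping Phase~1 altogether). Writing $E[T_D]$ as in \eqref{eq:t_d}, namely $E[T_D](x)=\frac{|\mathbb{E}| t_r x}{\ln 2}+t_t\,E[Z](x)$, I would first separate the two competing contributions: the Phase-1 cost $\frac{|\mathbb{E}| t_r x}{\ln 2}$ grows linearly in $x$, while the Phase-2 term $t_t\,E[Z](x)$ is nonnegative. Evaluating at the minimiser therefore gives the lower estimate $E[T_D](x_e^*)\ge \frac{|\mathbb{E}| t_r x_e^*}{\ln 2}$.

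Next, since $x=0$ is a feasible choice, optimality of $x_e^*$ yields $E[T_D](x_e^*)\le E[T_D](0)=t_t\,E[Z](0)$. Chaining the two inequalities gives $\frac{|\mathbb{E}| t_r x_e^*}{\ln 2}\le t_t\,E[Z](0)$, so the entire argument reduces to bounding $E[Z](0)$ from above.

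The crux is a uniform bound $E[Z]\le 1/q_{min}$. From the closed form in \eqref{eq:e_zn}, $E[Z\mid N^*]=\frac{1-(1-q)^{f}-fq(1-q)^{f}}{q}\le \frac{1}{q}$, because the two subtracted terms $(1-q)^f$ and $fq(1-q)^f$ are nonnegative; that is, the truncated geometric mean never exceeds the untruncated value $1/q$. By Lemma~\ref{Lem:Prob_de} the per-slot detection probability obeys $q\ge q_{min}$ for every admissible $N^*$ in $[\,|\mathbb{E}|-M,\;|\mathbb{E}|-M+|\mathbb{U}|\,]$, hence $E[Z\mid N^*]\le 1/q_{min}$; averaging over the binomial law of $N^*$ preserves the bound, so $E[Z](x)\le 1/q_{min}$ for all $x$, and in particular $E[Z](0)\le 1/q_{min}$. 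Substituting gives $\frac{|\mathbb{E}| t_r x_e^*}{\ln 2}\le t_t/q_{min}$, i.e.\ $x_e^*\le \frac{t_t\ln 2}{|\mathbb{E}| t_r q_{min}}$, which comfortably satisfies the stated bound $\frac{2 t_t\ln 2}{t_r |\mathbb{E}| q_{min}}$ (the factor $2$ simply absorbing any slack if one prefers cruder estimates for $E[Z](0)$).

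The main obstacle I expect is the uniform control of the Phase-2 term across the \emph{random} surviving population $N^*$: one must recognise (i) that the truncated geometric expectation in \eqref{eq:e_zn} is dominated by $1/q$, and (ii) that the lower bound $q\ge q_{min}$ of Lemma~\ref{Lem:Prob_de} holds over the whole support of $N^*$, so that taking the expectation against the binomial distribution of active unexpected tags does not degrade the estimate. Once these two facts are in place, the remainder is the elementary minimiser-versus-reference-point comparison anchored at $x=0$.
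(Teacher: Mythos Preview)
Your argument is correct and in fact yields the sharper conclusion $x_e^*\le \frac{t_t\ln 2}{t_r|\mathbb{E}|q_{min}}$, half the bound stated. The paper's proof uses the same two estimates on $g(x)\triangleq E[T_D](x)$ that you do, namely $g(x)>\frac{|\mathbb{E}|t_r x}{\ln 2}$ and $g(x)\le \frac{|\mathbb{E}|t_r x}{\ln 2}+\frac{t_t}{q_{min}}$, but it anchors the comparison at $x_0\triangleq \frac{t_t\ln 2}{t_r|\mathbb{E}|q_{min}}$ rather than at $0$: for any $x>2x_0$ one gets $g(x)>\frac{|\mathbb{E}|t_r x}{\ln 2}>\frac{2|\mathbb{E}|t_r x_0}{\ln 2}=\frac{|\mathbb{E}|t_r x_0}{\ln 2}+\frac{t_t}{q_{min}}\ge g(x_0)$, so the minimiser cannot exceed $2x_0$. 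Your choice of anchor $x=0$ is more economical because the linear term in the upper bound vanishes there, turning the chain into $\frac{|\mathbb{E}|t_r x_e^*}{\ln 2}\le g(x_e^*)\le g(0)\le \frac{t_t}{q_{min}}$ directly. The paper's version is slightly more robust in that it never needs $x=0$ to be an admissible configuration, but since the protocol explicitly allows skipping Phase~1 (cf.\ Theorem~\ref{Lem:Par_I}), that concern does not arise here.
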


\begin{proof}
We write $E[T_D]$ as a function of $x$. Specifically, let $E[T_D]=g(x)$. To prove the lemma, we show that for any $x>2x_0$ it holds that $g(x)\ge g(x_0)$ where $x_0\triangleq \frac{t_t \ln2}{t_r |\mathbb{E}| q_{min}}$.

To this end, we first derive the bounds of $g(x)$.
Recall~\eqref{eq:z},\eqref{eq:e_zn},~\eqref{eq:e_z} and~\eqref{eq:t_d}, we have
\begin{eqnarray*}
g(x)&>&\frac{|\mathbb{E}| t_r x}{\ln 2},  \\
g(x)&\le& \frac{|\mathbb{E}| t_r x}{\ln 2} + \frac{t_t}{q_{min}}.
\end{eqnarray*}

For any $x>2x_0$, we then have
$$g(x)>\frac{|\mathbb{E}| t_r x}{\ln 2}>\frac{2|\mathbb{E}| t_r x_0}{\ln 2}=\frac{|\mathbb{E}| t_r x_0}{\ln 2} + \frac{t_t}{q_{min}}\ge g(x_0)$$
The lemma is thus proved.
\end{proof}

Lemma~\ref{Lem:x_opt_upp} shows that $x^*_e$ falls into the range $[0,2x_0]$. We can thus search $[0,2x_0]$ to find $x^*_e$ that minimises $E[T_D]$ and then set $J$ and $k_j^*$ such that $\sum_{j=1}^{J}k_j^* = x^*_e$.

\subsection{BMTD parameter setting: summary}

We conclude this section by streamlining the procedure of the parameter setting in the BMTD:
\begin{enumerate}
\item \emph{Set parameters in Phase 2:} given ${|\mathbb{E}|}$, $M$, $\alpha$ and ${|\mathbb{U}|}$, compute $f^*_w$ and $y^*$
      by~\eqref{Eq:f_w_op} and~\eqref{Eq:y_op}, respectively, and set $R_w$ and $W$ such that $R_wW=y^*$;
\item \emph{Set parameters in Phase 1:} compute $x^*$ by Theorem~\ref{Lem:Par_I} if the objective is to minimise the worst-case execution time; compute $x^*_e$ if the objective is to minimise the expected detection time; then the set of $k_j^*$ and $J$ is given such that $\sum_{j=1}^J k_j^*=x^*$ or $\sum_{j=1}^J k_j^*=x^*_e$.
\end{enumerate}
Following the above two steps, we can obtain all parameters in the BMTD.

\section{Cardinality estimation}
\label{sec:estimation}
In order to execute the BMTD, the reader needs to estimate the number of unexpected tags $|\mathbb{U}|$. In our work, we use the SRC estimator which is designed in~\cite{chen2013understanding} and is the current state-of-the-art solution. Denote by $\overline{|\mathbb{E}|-m + |\mathbb{U}|}$ the estimated total number of tags in the system, then the cardinality $|\mathbb{U}|$ can be approximated as $\overline {|\mathbb{U}|} = \overline{|\mathbb{E}|-m + |\mathbb{U}|}-|\mathbb{E}|$ if $m<<|\mathbb{E}|, |\mathbb{U}|$.
Because the number of bits that set to one in Bloom filter is concentrated tightly around the mean~\cite{mitzenmacher2005probability} and~\cite{hao2007building}, once the estimation $\overline {|\mathbb{U}|}$ is obtained, we can calculate the expectation of $N^*$ according to~\eqref{Eq:E_N^*} with $m=M$ and use it as the estimator of $N^*$.

The SRC estimator consists of two phases: rough estimation and accurate estimation. It is proven in~\cite{chen2013understanding} that SRC can obtain a rough estimation $\hat{n}$ which at least equals to $0.5({|\mathbb{E}|-m + |\mathbb{U}|}) $ after its first phase. In the second phase, SRC can achieve that the relative estimation error is not greater than $\epsilon$ which is referred to as confidence range with the settings as follows: the frame size $L_{est}=\frac{65}{(1-0.04^\epsilon)^2}$ and the persistence probability $p_{pe} = \min\{1,1.6 L_{est}/\hat{n}\}$.

We then analyse the overhead introduced to estimate the cardinality of $\mathbb{U}$. As proven in~\cite{chen2013understanding}, the overhead of SRC estimator is at most $O(\frac{1}{\epsilon^2}+\log\log(|\mathbb{U}|+|\mathbb{E}|))$, which is moderate for large-scale RFID systems with large $|\mathbb{U}|$ and $|\mathbb{E}|$.

\subsection{Fast detection of missing event}

In our estimation approach, we require that $m\ll |\mathbb{E}|, |\mathbb{U}|$. In case where $m$ is close to $|\mathbb{E}|, |\mathbb{U}|$, the estimation may not be accurate. Luckily, in this case, we can quickly detect a missing event in the cardinality estimation phase due to large $m$.

Specifically, we analyze the SRC estimator's capability of detecting missing event under large $m$ by comparing the pre-computed slots with those selected by the present tags. Recall the proof of Lemma~\ref{Lem:Prob_de}, we can derive the detection probability in any given slot, defined as $q_{pre}$, as
\begin{equation}
q_{pre} = \left(1-\big(1-\frac{p_{pe}}{L_{est}}\big)^{m}\right)*
            \Big(1-\frac{p_{pe}}{L_{est}}\Big)^{(\mathbb{U}+\mathbb{E}-m)}.
\end{equation}
Since the detections in different slots are independent of each other, the probability of detecting at least one missing tag event by the SRC estimator can be calculated as $1-(1-q_{pre})^{L_{est}}$ which is a increasing function of $m$.

\begin{figure}[htbp]
\centering
\includegraphics[width=7cm]{./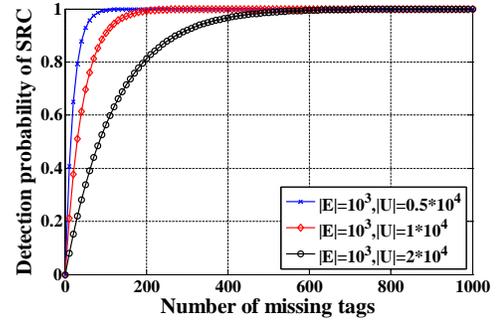}
\caption{$q_{pre}$ vs. $m$.}
\label{Fig:SRC}
\end{figure}

Fig.~\ref{Fig:SRC} illustrates the detection probability of SRC with the various number of missing tags under different unexpected tag population sizes. To obtain the figure, we set $|\mathbb{E}|=10^3$ and $\epsilon=0.1$. It is observed that in the cases that $|\mathbb{U}|=0.5*10^4$, $1*10^4$, $2*10^4$, SRC is able to detect at least a missing tag event with probability one when $m$ is not less than $100$, $200$, $600$, which means that a missing event is detected by SRC and the reader does not need to invoke the BMTD. In the other side, in the cases that $m$ is less than $100$, $200$, $600$, it holds that $|{\frac{\overline{|\mathbb{U}|}}{|\mathbb{U}|}-1}|$$\le$ $0.138,0.132,0.128$, respectively. With reference to the conclusion drawn from the Fig.~\ref{Fig:error_estimation}, the BMTD can tolerate these levels of estimation error.

\subsection{Sensibility to estimation error}

The estimation algorithm we use inevitably introduces error on $|\mathbb{U}|$, which may have a negative impact on the performance of the BMTD. In order to investigate this impact, we next illustrate the sensitivity of the detection time to the estimation error.

\begin{figure}[htbp]
\centering
\centering
\includegraphics[width=7cm]{./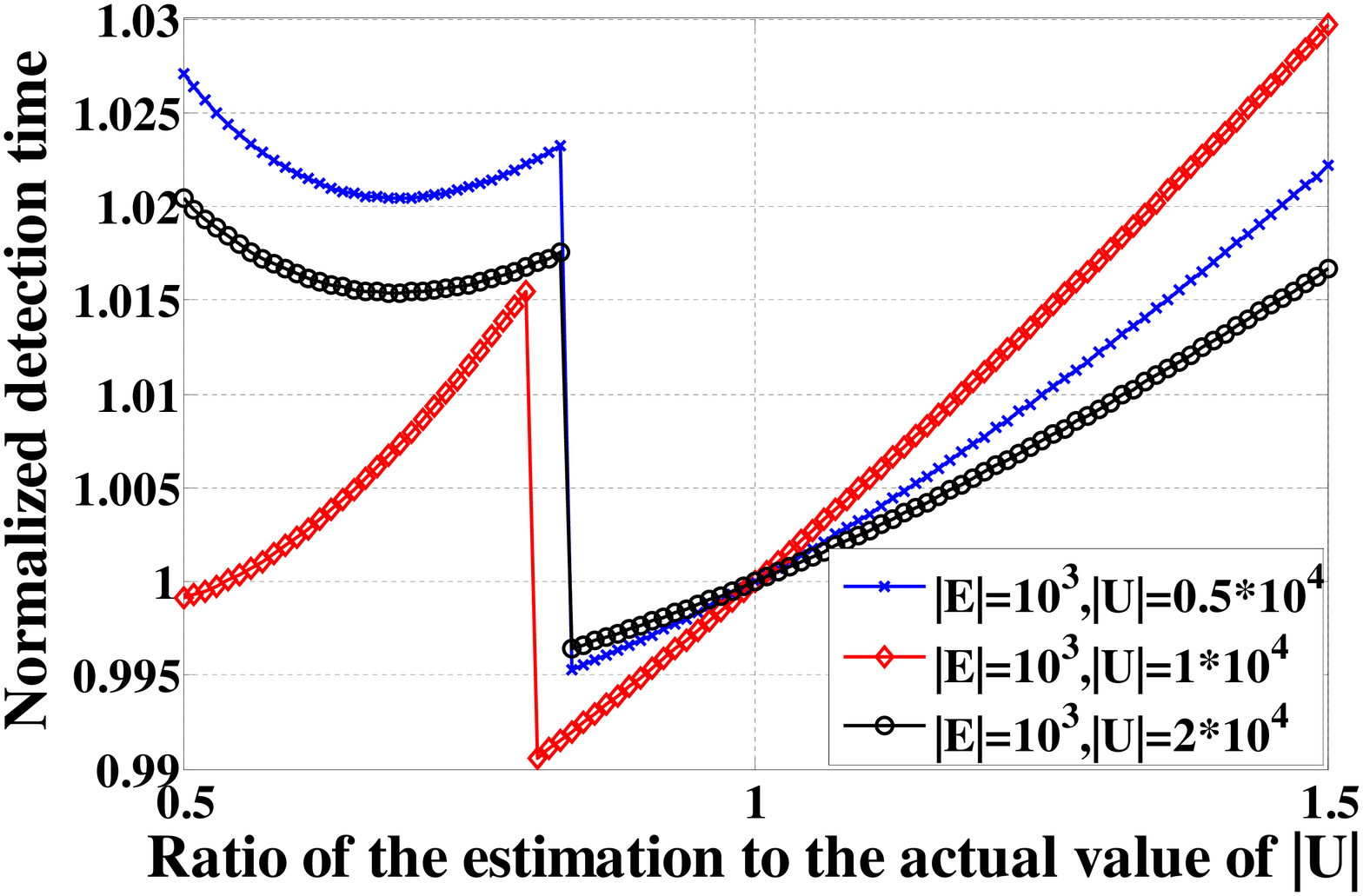}
\caption{$\frac{\overline {E[{T_D}]}}{E[T_D]}$ vs. $\frac{\overline{|\mathbb{U}|}}{|\mathbb{U}|}$.}
\label{Fig:error_estimation}
\end{figure}

Fig.~\ref{Fig:error_estimation} shows the theoretically calculated expected detection time from~\eqref{Eq:E[T_D]} under different unexpected tag population sizes and various levels of estimation error for $M=1$. All results here are normalized with respect to the expected detection time without estimation error, which can be represented as $\frac{\overline {E[{T_D}]}}{E[T_D]}$. As shown in the figure, the expected detection time based on the estimation is greater than the actual one ${E[T_D]}$ almost in all levels of estimation error. But it is worth noticing that the expected detection time only increases by up to $0.5\%$ when $|{\frac{\overline{|\mathbb{U}|}}{|\mathbb{U}|}-1}| \le 0.1$, which is nearly same with that without estimation error. Even when $|{\frac{\overline{|\mathbb{U}|}}{|\mathbb{U}|}-1}| = 0.5$, the departure from the detection time without estimation is only $3\%$. Therefore, it can be concluded that BMTD is very robust to the estimation error.

\subsection{Enforcing detection reliability}

Estimation error also has impact on the reliability of the BMTD as $P_{sys}$ is calculated base on the estimated cardinality.


To enforce the detection reliability, we introduce more rounds to execute additional Bloom filters. The scheme works as follows: After receiving the Bloom filtering vector constructed by the active tags in the set $\mathbb{E}_r \cup \mathbb{U}_{r}$ in each round of Phase 2, the reader first counts the actual number of '1' bits in the filtering vector, defined as $s_1$ and uses it to compute the actual false positive probability, denoted by $\hat{P}_{2,w}$, as follows:
\begin{equation}
\hat{P}_{2,w} = \frac{s_1}{f^*_w},
\end{equation}
because an arbitrary unexpected tag maps to a '1' bit with a probability of $s_1$ out of $f^*_w$.

Following~\eqref{Eq:P_system}, we have the observed protocol reliability, denoted by $\hat P_{sys}$, as follows:
\begin{equation}
\hat P_{sys}=1- \hat P^{MW}_{2,w}.
\end{equation}
If $\hat P_{sys}<\alpha$, the reader adds one more round in Phase 2 to further detect the missing tag event until $\hat P_{sys} \ge \alpha$.

\subsection{Discussion on multi-reader case}


In large-scale RFID systems deployed in a large area, multiple readers are thus deployed to ensure the full coverage for a larger number of tags in the interrogation region. 
In such scenarios, we leverage the approach proposed in~\cite{kodialam2007anonymous} and employed in~\cite{shahzad2015expecting}.
The main idea is that a back-end server is used to synchronize all readers such that the RFID system with multiple readers operates as the single-reader case.

Specially, the back-end server calculates all the parameters involved in BMTD and constructs Bloom filter and sends them to all readers such that they broadcast the same parameters and Bloom filter to the tags. Furthermore,
each reader sends its individual Bloom filtering vector back to the back-end server. When the back-end server receives all Bloom filtering vectors, it applies logical $OR$ operator on all received Bloom filtering vectors, which eliminates the impact of the duplicate readings of tags in the overlapped interrogation region. Consequently, a virtual Bloom filter is constructed by the back-end server.

\section{Performance Evaluation}
\label{sec:simulation}

The problem addressed in this paper is to detect the missing expected tags in the presence of a large number of unexpected tags in a time-efficient and reliable way. In this section, we evaluate the performance of the proposed BMTD. It has been shown in~\cite{shahzad2015expecting} that existing missing detection protocols cannot achieve the required reliability when there are unexpected tags in the RFID systems except the latest RUN~\cite{shahzad2015expecting}. We thus compare our proposed BMTD to RUN in terms of the actual reliability and the detection time. Note that the detection time can be interpreted as the time taken to either detect the fist missing tag event if a missing tag is found or complete the execution if no missing tag is found.

The simulation parameters are set with reference to~\cite{luo2014missing} and~\cite{shahzad2015expecting}. Specifically, since both transmission rates from the tags to the reader and the reader to the tags depend on physical implementation and interrogation environment, we make the same assumption as in~\cite{luo2014missing} that $t_r=t_t$. Moreover, because RUN is the baseline protocol, we use the same performance metrics as in~\cite{shahzad2015expecting} where the time needed to detect a missing tag event is shown in terms of the number of slots. To that end, we, without loss of generality, assume $t_r=t_t=1$ in ~\eqref{Eq:E[T_D]} in the simulation.
Besides, we compute the optimal parameter values for RUN by following its specifications.

In the simulation, we use SRC~\cite{chen2013understanding} armed with missing tag detection function in this paper to estimate the unexpected tag population size with the confidence rang $\epsilon=0.1$. And all presented results are obtained by taking the average value of 100 independent trials under the same simulation setting.

We start by evaluating the performance of the BMTD by optimizing the worst-case execution time and the expected detection time.

\subsection{Comparison between two strategies of BMTD}

In this subsection, we compare the performance of two strategies of the BMTD which are abbreviated to Worst-$M$ and Expected-$M$ here, respectively. We set $|\mathbb{E}|=1000$, $m=100$, $\alpha=0.9$, $|\mathbb{U}|=10000:5000:30000$, $M=1$ and $50$.

Table~\ref{Tab:worst_exp} lists the results where the first and second elements in the two-tuple $(\cdot,\cdot)$ denote the actual reliability and detection time, respectively. It can be seen that Expected-$M$ costs less time than Worst-$M$ to achieve the same reliability which is greater than the system requirement on the detection reliability, especially when $M$ is small. Specifically, compared with Worst-$1$, Expected-$1$ reduces the detection time by up to $51.92\%$ when $|\mathbb{U}|=10000$. This is because $x^*=5$ is  too large for Phase 1 by optimizing the worst-case execution time, which wastes time. In contrast, minimizing the expected detection time relieves the influence of unexpected tag population size on the time of Phase 2 and thus outputs a smaller $x^*_e=2$.
In the rest of our simulation, we configure the parameters of the BMTD to minimise the expected detection time. 

\begin{table}[!htbp]
\centering
\caption{Actual reliability and detection time of BMTD}
\label{Tab:worst_exp}
\begin{tabular}{|l|l|l|l|l|l|}
\hline
\multirow{2}{*}{\textbf{Strategy}} & \multicolumn{5}{c|}{\textbf{Number of unexpected tags}}\\
\cline{2-6} & 10000 & 15000 &20000&25000&30000\\
\hline
Worst-1& (1,4108)  & (1,4441) & (1,5013) &(1,5453) &(1,5510) \\
\hline
Expected-1& (1,1975) &  (1,3187)  &  (1,3569) &  (1,3828)& (1,4191) \\
\hline
Worst-50& (1,1357)  & (1,1841) & (1,2753) & (1,2762)& (1,2995)\\
\hline
Expected-50& (1,1353) & (1,1618) &  (1,2272) & (1,2472) &(1,2815)  \\
\hline
\end{tabular}
\end{table}

\subsection{Comparison between BMTD and RUN}

\subsubsection{Comparison under different number of missing tags}

In this subsection, we evaluate the performance of BMTD under different number of missing tags, which stands for the effectiveness and efficiency of BMTD. To that end, we set $|\mathbb{E}|=1000$, $|\mathbb{U}|=30000$, $m=1:50:901$, $\alpha=0.9$ and $0.99$. Moreover, we set the threshold to $M=1$.

\textbf{Actual reliability:}
BMTD achieves the required reliability for any missing tag population size when there are a large number of unexpected tags in the RFID systems.
Fig.~\ref{Fig:R_09_m} and~\ref{Fig:R_099_m} illustrate the actual reliability of BMTD and RUN for $\alpha=0.9$ and $0.99$, respectively. It can be observed that both BMTD and RUN achieve the reliability more than that required by the system.

\begin{figure}[htbp]
\centering
\subfigure[$\alpha=0.9$]{
\begin{minipage}[t]{0.46\linewidth}
\centering
\includegraphics[width=1\textwidth]{./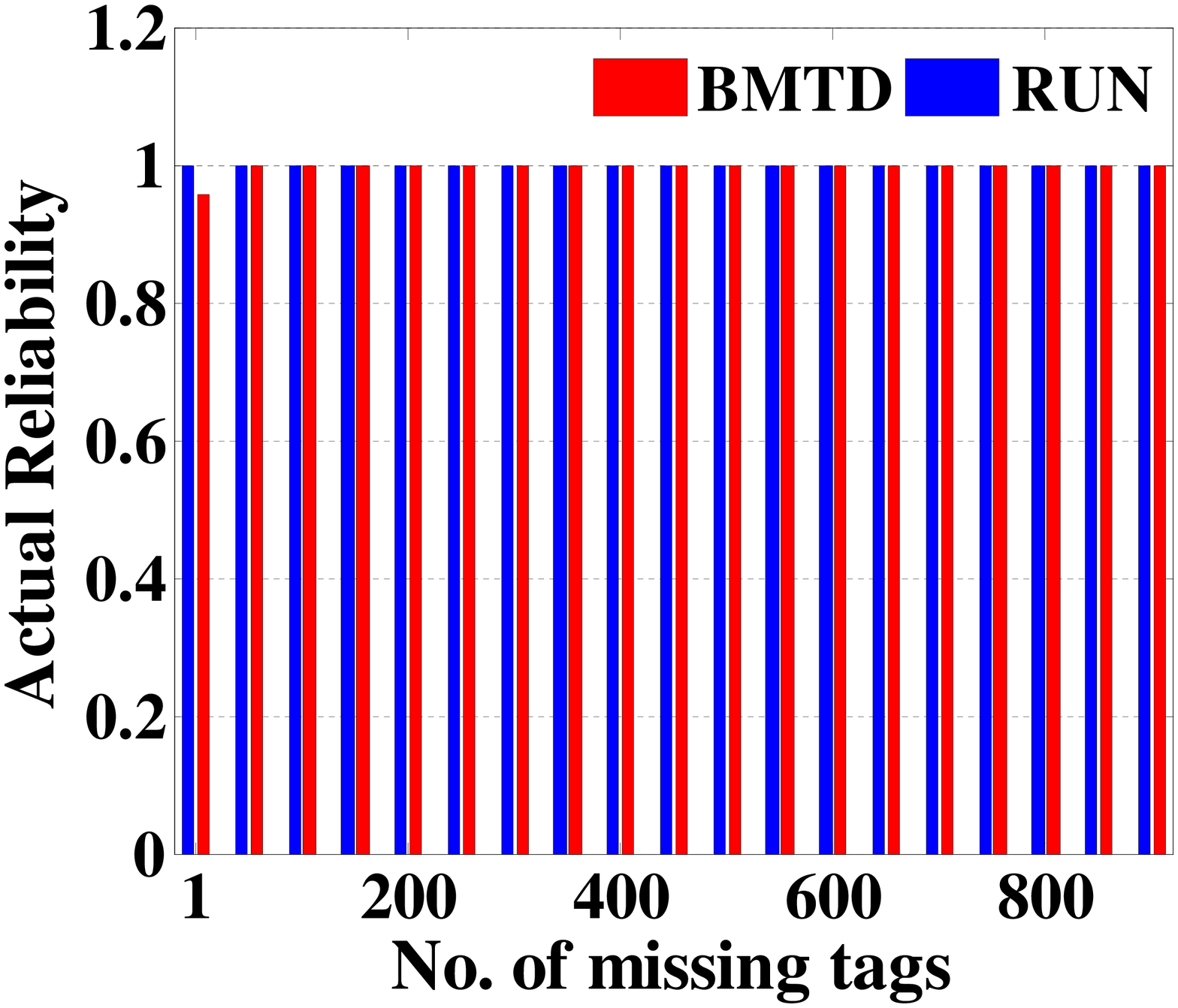}
\label{Fig:R_09_m}
\end{minipage}}
\subfigure[$\alpha=0.99$]{
\begin{minipage}[t]{0.46\linewidth}
\centering
\includegraphics[width=1\textwidth]{./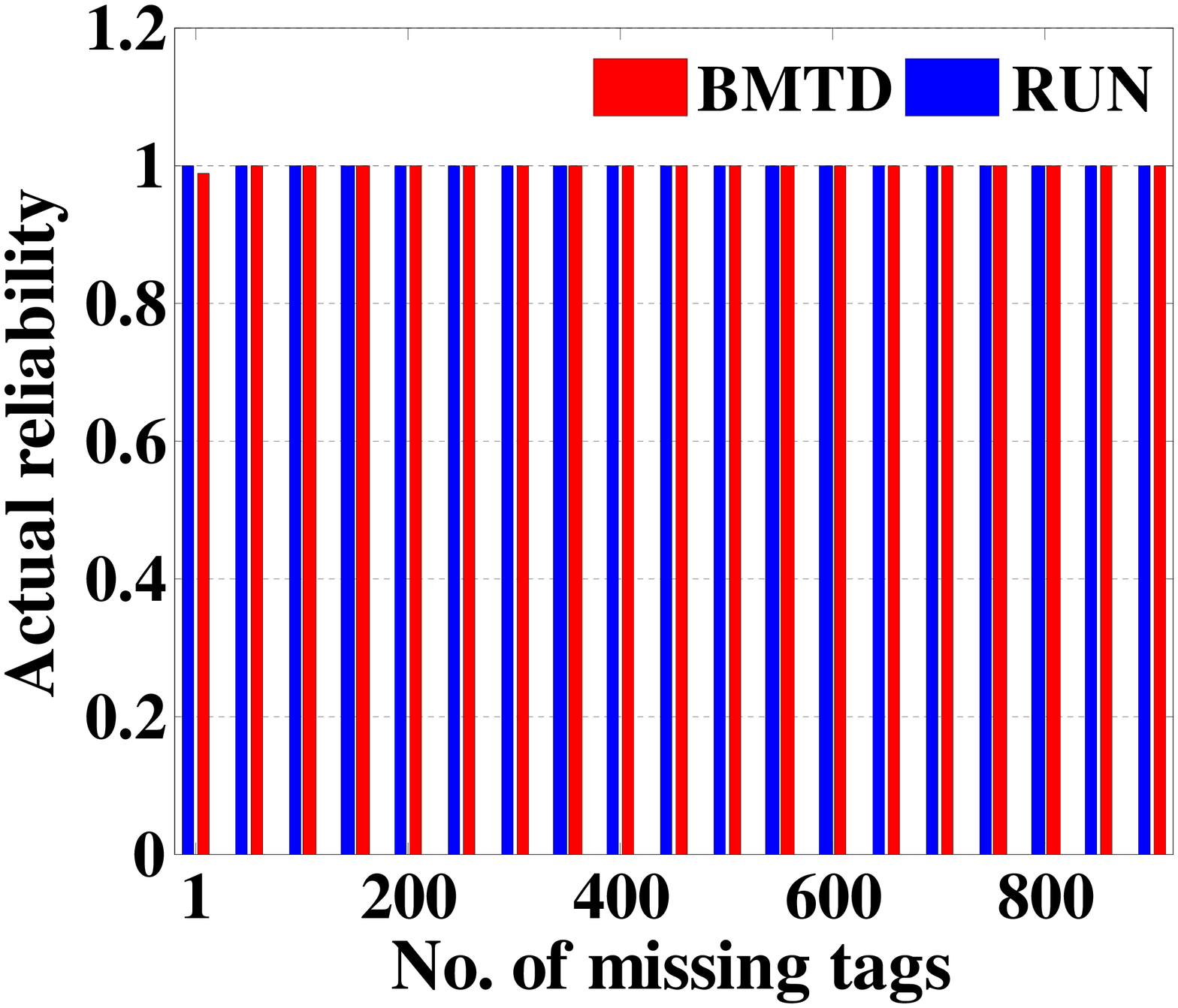}
\label{Fig:R_099_m}
\end{minipage}}
\caption{Actual reliability vs. number of missing tags}
\label{Fig:R_m}
\end{figure}

\textbf{Detection time:}
BMTD is more time-efficient in comparison to RUN.
Fig.~\ref{Fig:T_09_m} and~\ref{Fig:T_099_m} show the detection time for $\alpha=0.9$ and $0.99$, respectively. For clearness, we further highlight the caves from $m=51$ to $901$.
As shown in the figures, the detection time of BMTD is far shorter than that of RUN and decreases with the number of missing tags significantly. This is unsurprising. BMTD is able to deactivate major unexpected tags, which greatly reduces the number of active tags in the population, such that the presence of more missing tags makes the detection much easier. In contrast, RUN does not take into account the impact of unexpected tag population size, leading to longer detection delay in the presence of large number of unexpected tags.


\begin{figure}[htbp]
\centering
\subfigure[$\alpha=0.9$]{
\begin{minipage}[t]{0.46\linewidth}
\centering
\includegraphics[width=1\textwidth]{./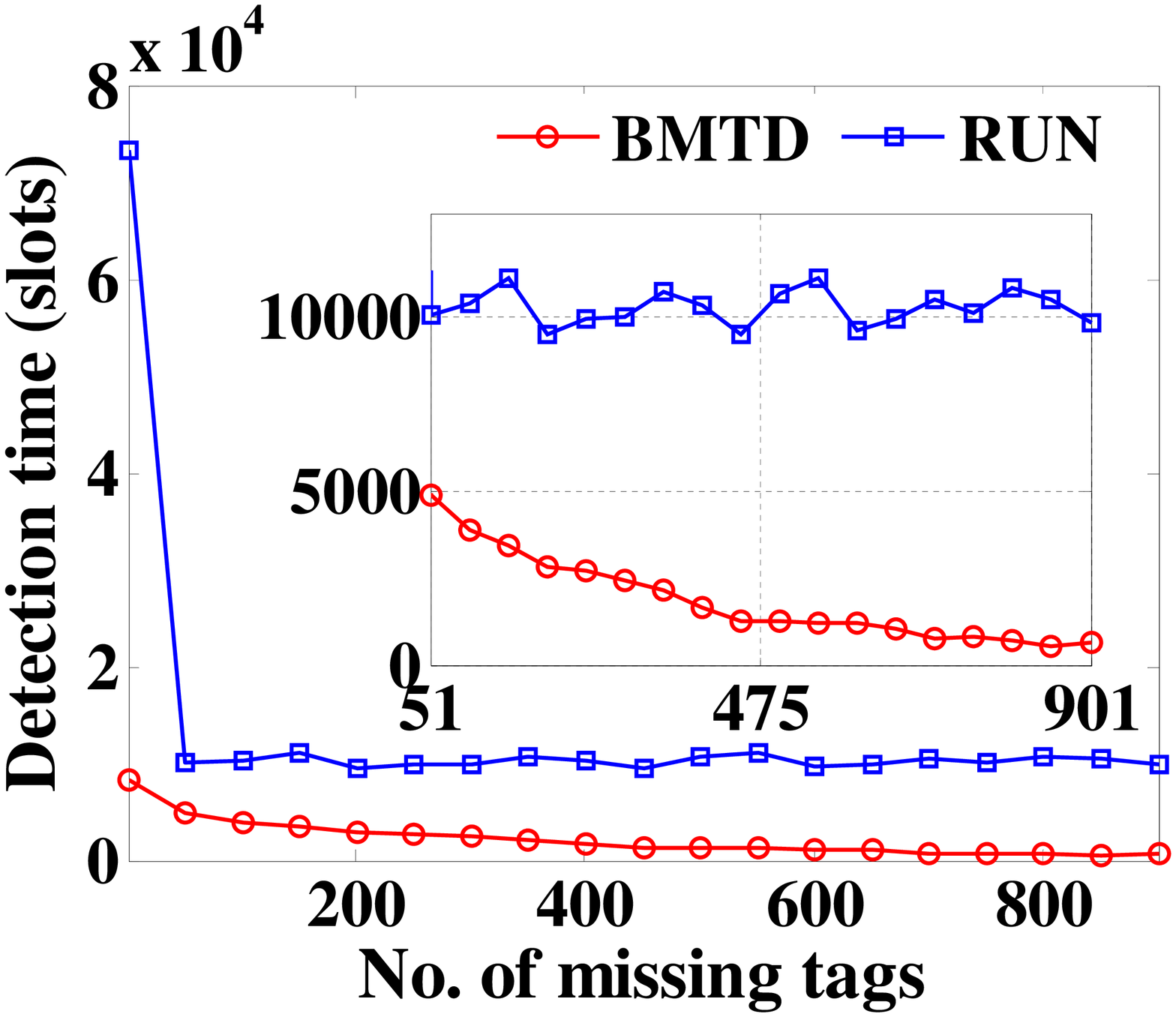}
\label{Fig:T_09_m}
\end{minipage}}
\subfigure[$\alpha=0.99$]{
\begin{minipage}[t]{0.46\linewidth}
\centering
\includegraphics[width=1\textwidth]{./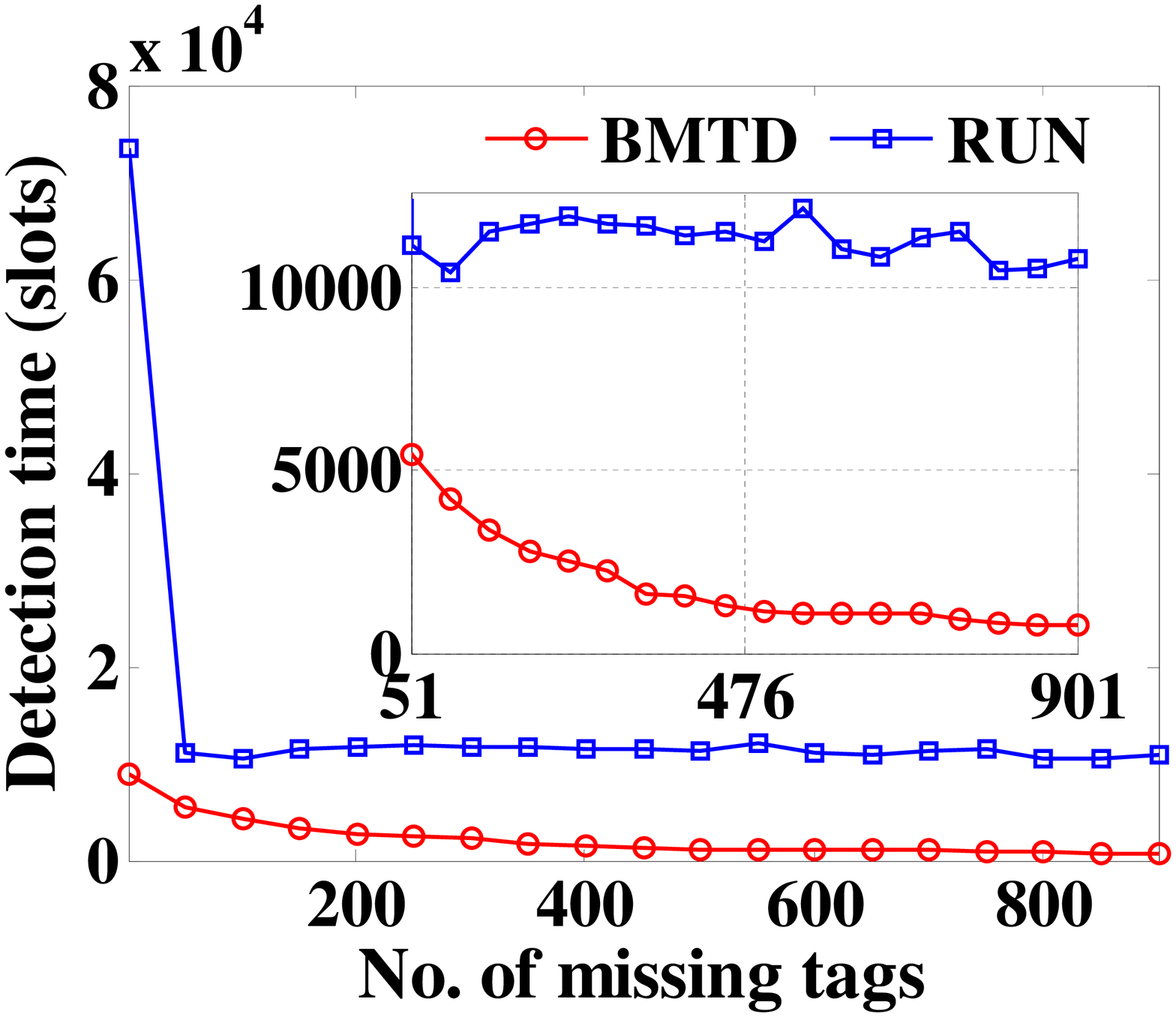}
\label{Fig:T_099_m}
\end{minipage}}
\caption{Detection time vs. number of missing tags}
\label{Fig:T_m}
\end{figure}

\subsubsection{Comparison under different number of unexpected tags}
In this subsection, we evaluate the performance of BMTD under different number of unexpected tags, which represents the generality of BMTD. To that end, we set $|\mathbb{E}|=1000$, $m=50$, $M=1$, $\alpha=0.9$ and $0.99$. Moreover, we select such $|\mathbb{U}|=1000, 5000:5000:30000$ that various values of $\frac{|\mathbb{U}|}{|\mathbb{E}|}$ are covered in the simulation.

\textbf{Actual reliability:}
BMTD achieves the reliability greater than the required reliability for different cardinalities of unexpected tag set.
Fig.~\ref{Fig:R_09_U} and~\ref{Fig:R_099_U} depict the actual reliability of BMTD and RUN for $\alpha=0.9$ and $0.99$, respectively. It can be observed that the actual reliability achieved by both BMTD and RUN is equal to one.

\begin{figure}[htbp]
\centering
\subfigure[$\alpha=0.9$]{
\begin{minipage}[t]{0.46\linewidth}
\centering
\includegraphics[width=1\textwidth]{./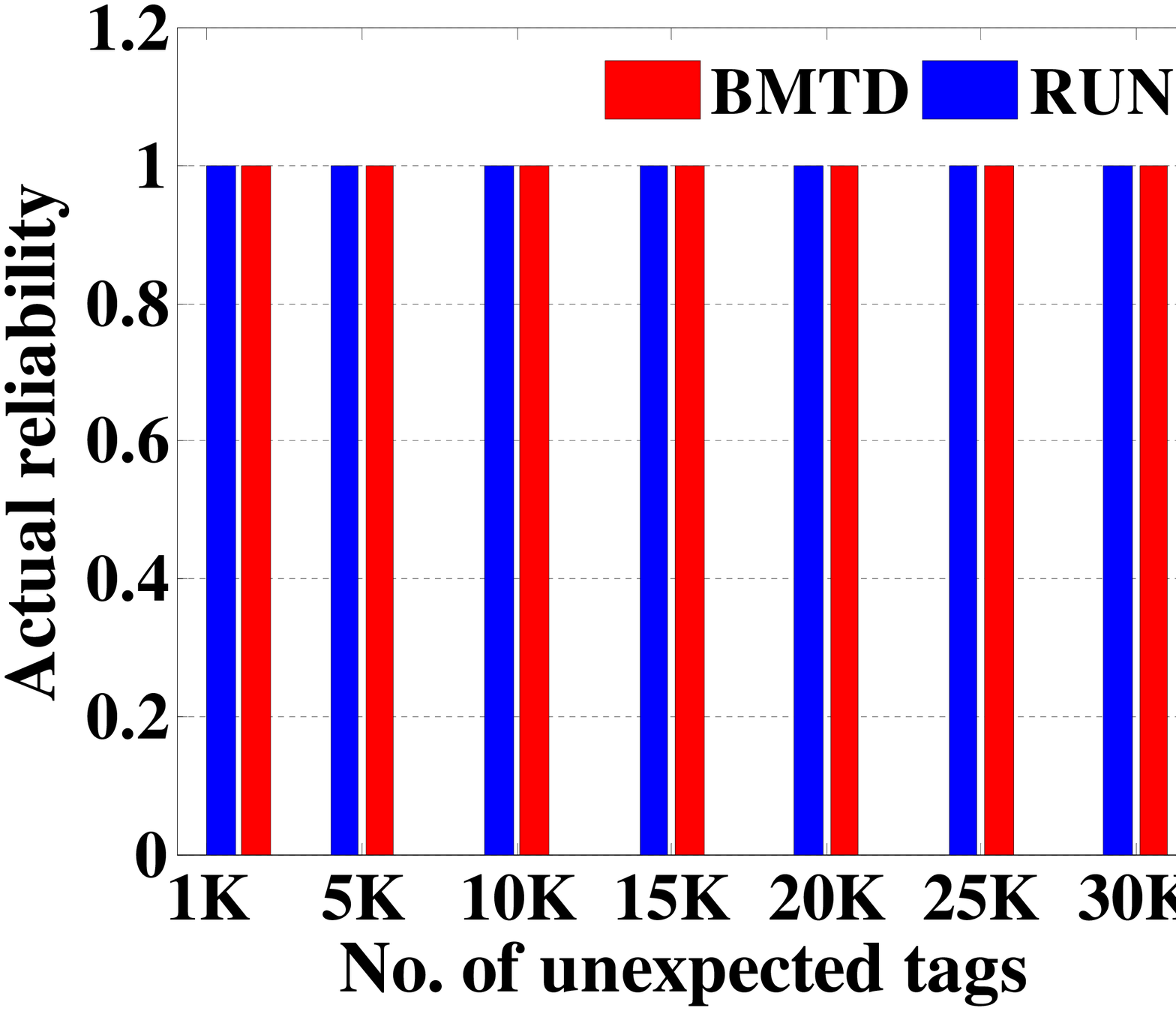}
\label{Fig:R_09_U}
\end{minipage}}
\subfigure[$\alpha=0.99$]{
\begin{minipage}[t]{0.46\linewidth}
\centering
\includegraphics[width=1\textwidth]{./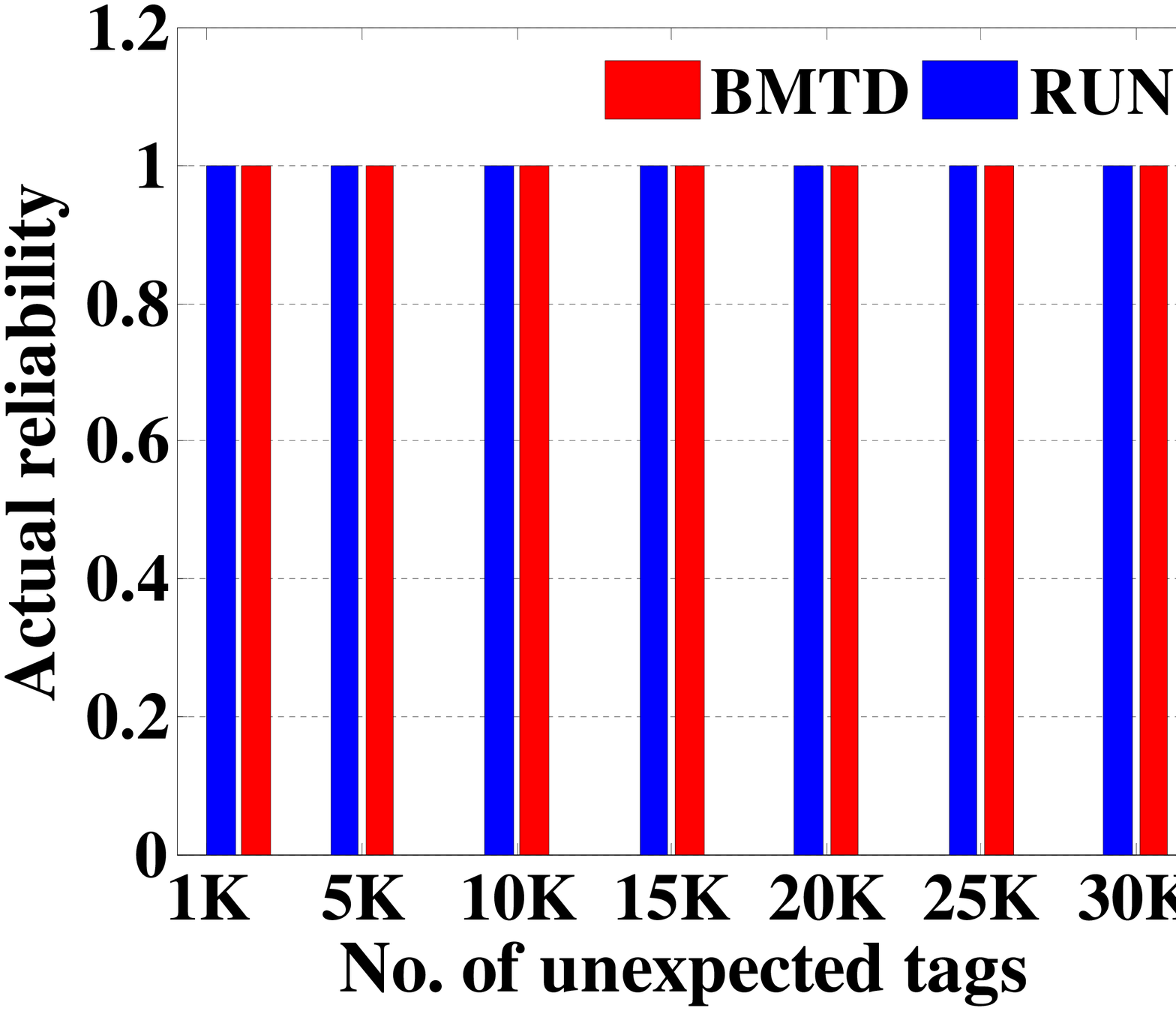}
\label{Fig:R_099_U}
\end{minipage}}
\caption{Actual reliability vs. number of unexpected tags}
\label{Fig:R_U}
\end{figure}

\textbf{Detection time:}
The BMTD outperforms the RUN considerably in terms of detection time even in the scenario with the small number of unexpected tag.
Fig.~\ref{Fig:T_09_U} and~\ref{Fig:T_099_U} show the detection time for $\alpha=0.9$ and $0.99$, respectively. As shown in the figures, BTMD is able to save time especially when more unexpected tags are present in the population. Moreover, the increase in detection time of BTMD is more slow than that of RUN. This is due to the ability of BTMD that it can detect the missing tag event when estimating the $|\mathbb{U}|$ and determine whether to execute the unexpected tag deactivation phase following Lemma~\ref{Lem:x_opt_upp}, which is exactly ignored in RUN.


\begin{figure}[htbp]
\centering
\subfigure[$\alpha=0.9$]{
\begin{minipage}[t]{0.45\linewidth}
\centering
\includegraphics[width=1\textwidth]{./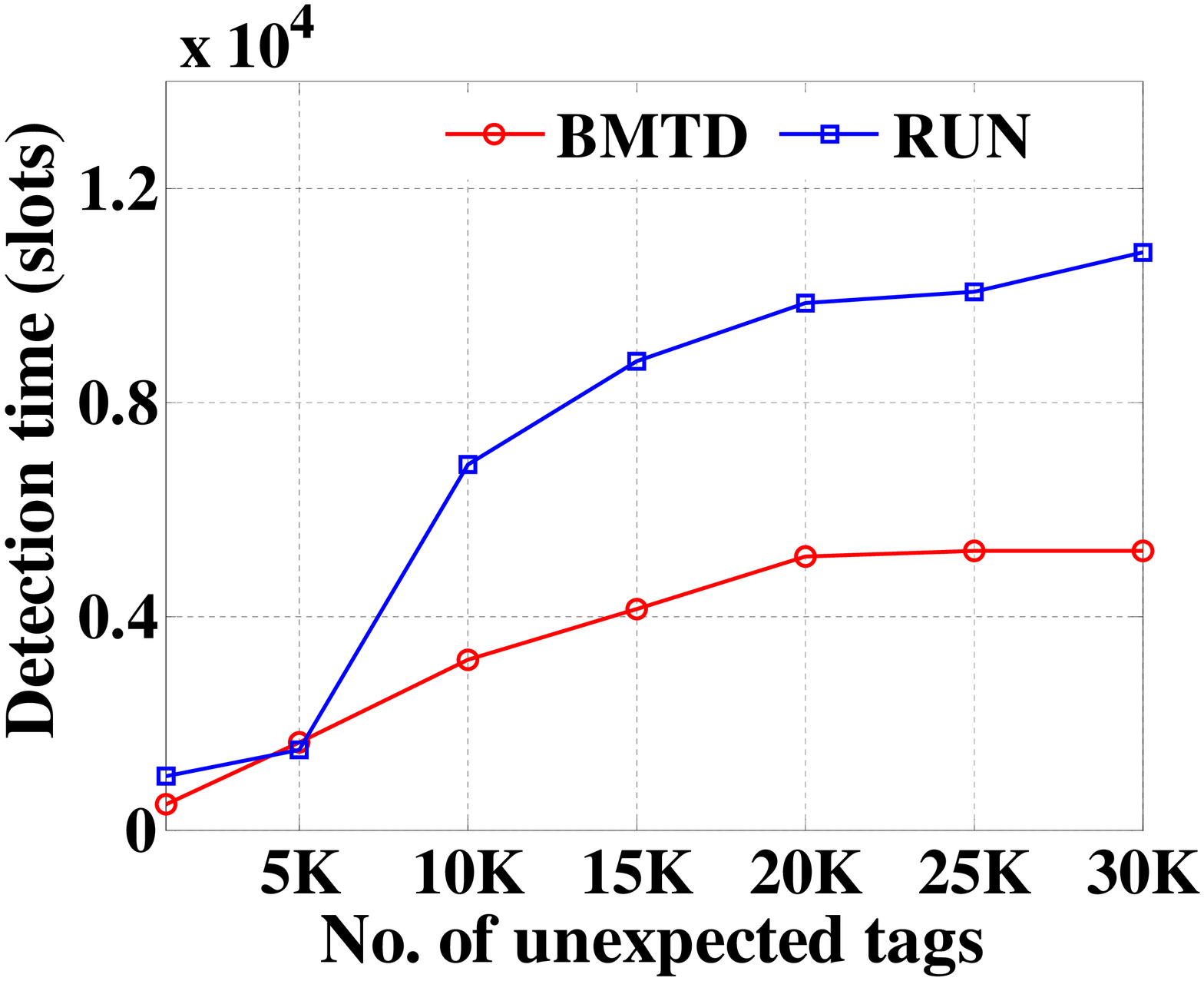}
\label{Fig:T_09_U}
\end{minipage}}
\subfigure[$\alpha=0.99$]{
\begin{minipage}[t]{0.45\linewidth}
\centering
\includegraphics[width=1\textwidth]{./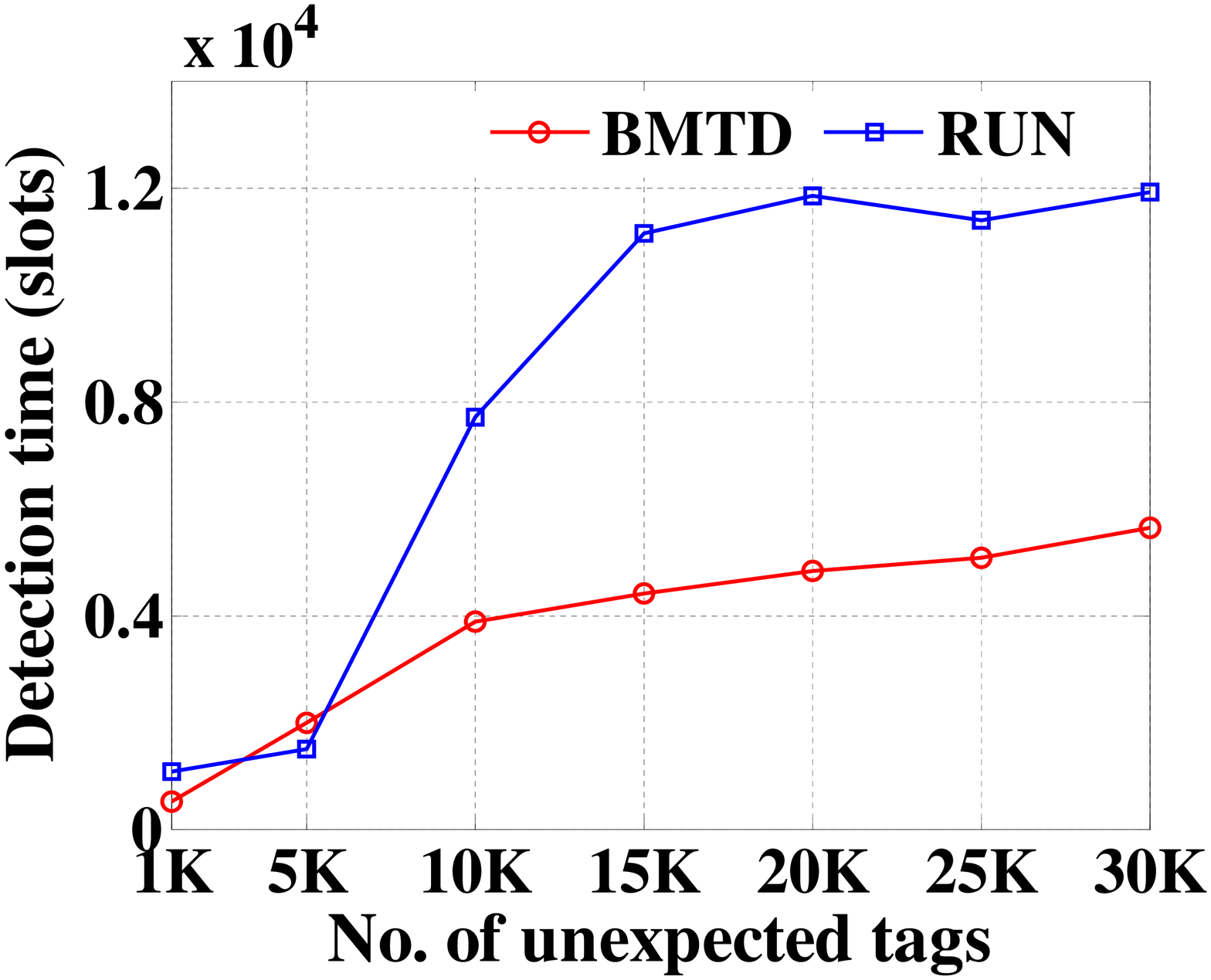}
\label{Fig:T_099_U}
\end{minipage}}
\caption{Detection time vs. number of unexpected tags}
\label{Fig:T_U}
\end{figure}

\subsubsection{Comparison under different values of threshold}

In this subsection, we evaluate the performance of BMTD under different thresholds, which represents the tolerability of BMTD. To that end, we set $|\mathbb{E}|=1000$, $|\mathbb{U}|=30000$, $m=100$, $\alpha=0.9$ and $0.99$. Moreover, we choose such $M=50:50:300$ that the threshold can be greater or smaller than or equal to the number of missing tags in the simulation.

\textbf{Actual reliability:}
BMTD achieves better reliability than the required reliability when $m \ge M$.
As shown in Fig.~\ref{Fig:R_09_M} and~\ref{Fig:R_099_M}, BMTD fails to achieve the required reliability only when $m<M$, which does not have negative impact because the objective of the missing tag detection protocol is to detect the missing tags only if the number of missing tags exceeds the threshold $M$.

\textbf{Detection time:}
BMTD can tolerate the deviation from the threshold in terms of the detection time even when $m<M$.
Fig.~\ref{Fig:T_09_M} and~\ref{Fig:T_099_M} show the detection time for $\alpha=0.9$ and $0.99$, respectively. It can be seen from the figures that the detection time of BMTD almost does not vary with the deviation. The detection time of RUN, by contrast, increases substantially as the deviation increase when $m<M$. This is because RUN terminates only when it runs optimal number of frames since the first frame when the estimated value of $|\mathbb{U}|$ does not vary by $0.1\%$ in consecutive $10$ frames if it does not detect any missing tag in any frame, while BMTD stops once the observed reliability $\hat P_{sys}$ exceeds $\alpha$.

\begin{figure}[htbp]
\centering
\subfigure[$\alpha=0.9$]{
\begin{minipage}[t]{0.46\linewidth}
\centering
\includegraphics[width=1\textwidth]{./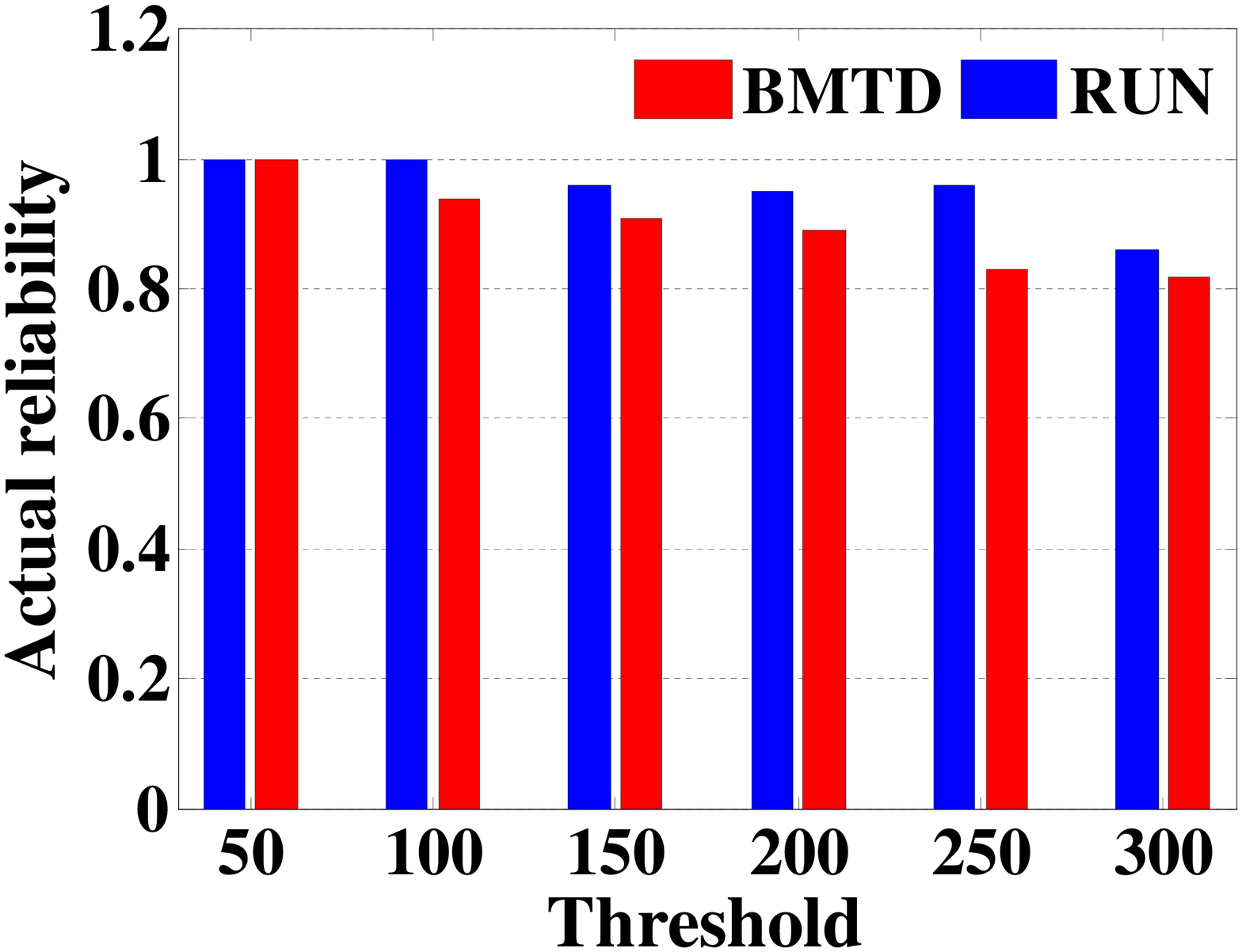}
\label{Fig:R_09_M}
\end{minipage}}
\subfigure[$\alpha=0.99$]{
\begin{minipage}[t]{0.46\linewidth}
\centering
\includegraphics[width=1\textwidth]{./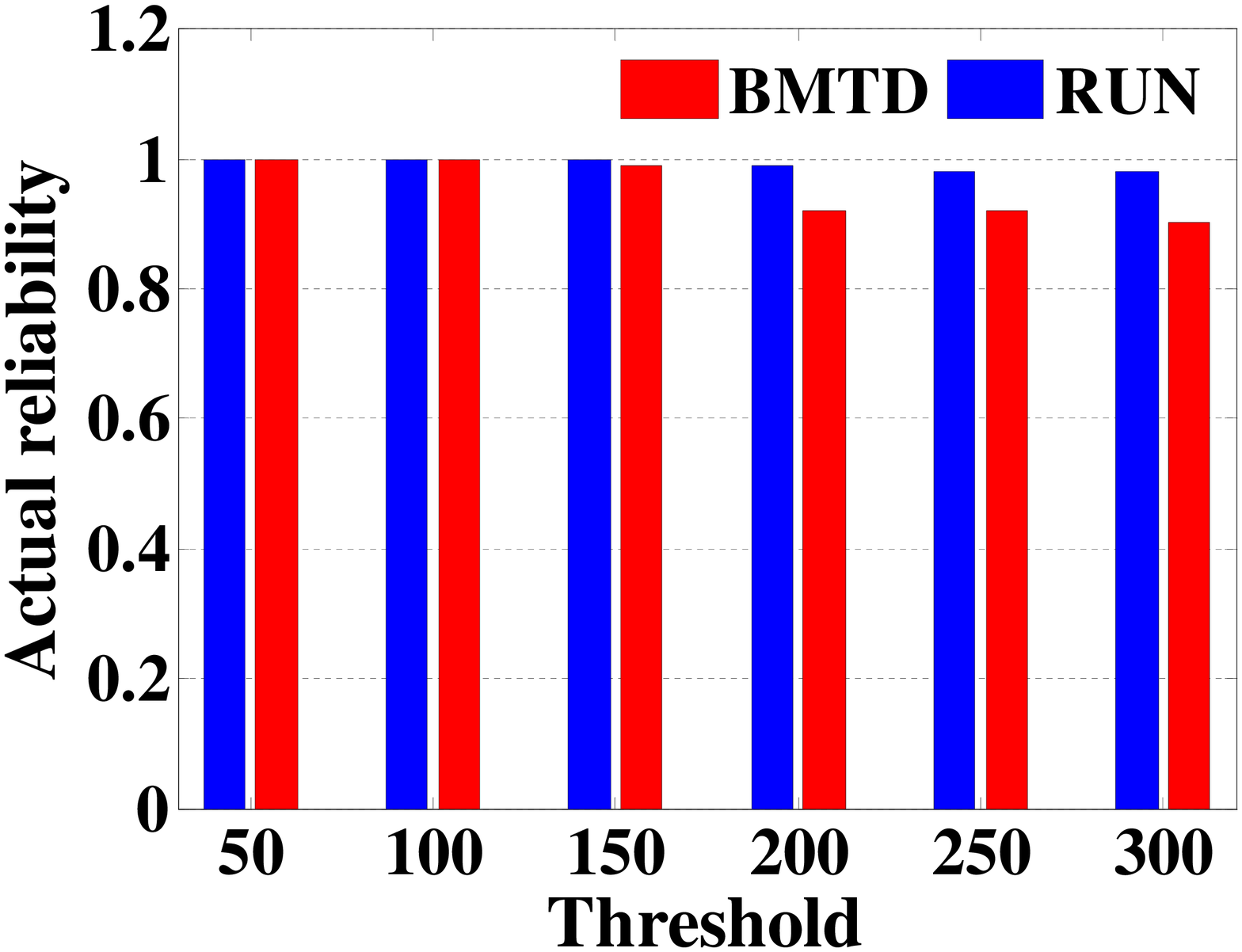}
\label{Fig:R_099_M}
\end{minipage}}
\caption{Actual reliability vs. threshold}
\label{Fig:R_M}
\end{figure}

\begin{figure}[htbp]
\centering
\subfigure[$\alpha=0.9$]{
\begin{minipage}[t]{0.46\linewidth}
\centering
\includegraphics[width=1\textwidth]{./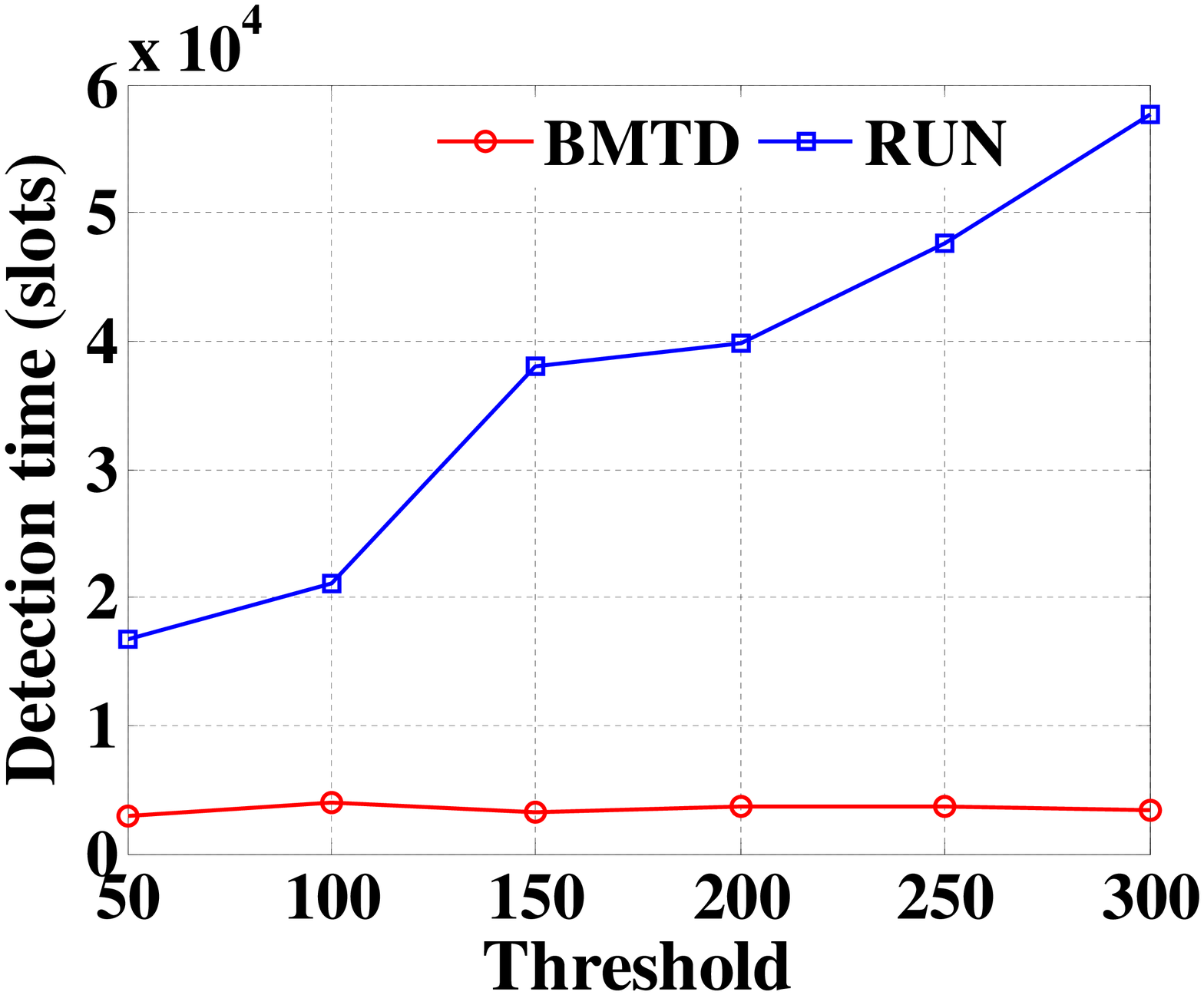}
\label{Fig:T_09_M}
\end{minipage}}
\subfigure[$\alpha=0.99$]{
\begin{minipage}[t]{0.46\linewidth}
\centering
\includegraphics[width=1\textwidth]{./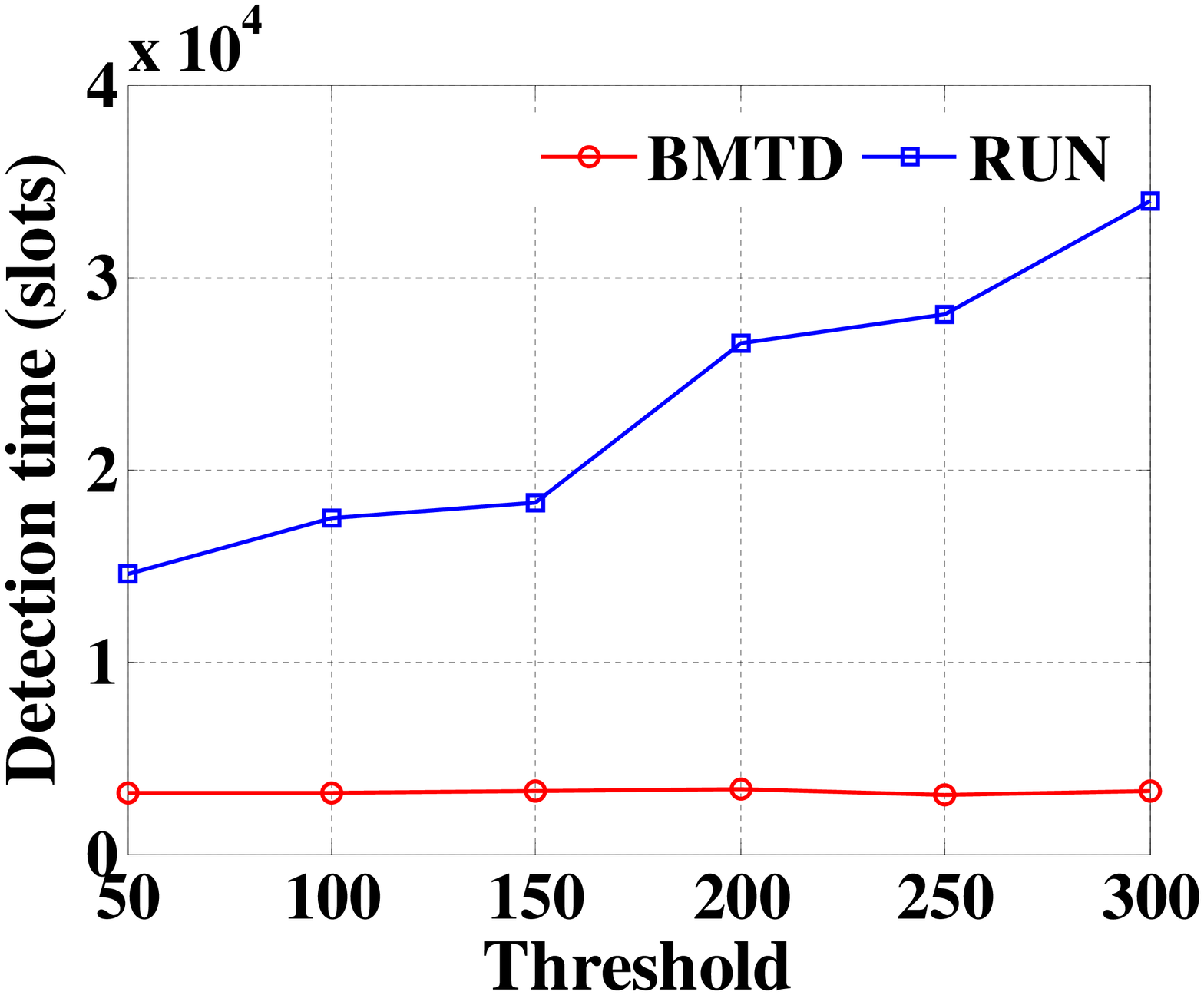}
\label{Fig:T_099_M}
\end{minipage}}
\caption{Detection time vs. threshold}
\label{Fig:T_M}
\end{figure}

\section{Conclusions}
\label{sec:conclusion}
This paper has investigated an important problem of detecting missing tags in the presence of a large number of unexpected tags in large-scale RFID systems. Specifically, we aim at detecting a missing tag event in a reliable and time-efficient way. This paper has proposed a two-phase Bloom filter-based missing tag detection protocol (BMTD). In the first phase, we employed Bloom filter to screen out and then deactivate the unexpected tags in order to reduce their interference to the detection. In the second phase, we further used Bloom filter to test the membership of the expected tags to detect missing tags. We also showed how to configure the protocol parameters so as to optimize the detection time with the required reliability. Furthermore, we conducted extensive simulation experiments to evaluate the performance of the proposed protocol and the results demonstrate the effectiveness and efficiency of the propose protocol in comparison with the state-of-the-art solution.

\bibliographystyle{abbrv}
\bibliography{missing_unknown}

\begin{thebibliography}{10}

\bibitem{bloom1970space}
B.~H. Bloom.
\newblock Space/time trade-offs in hash coding with allowable errors.
\newblock {\em Communications of the ACM}, 13(7):422--426, 1970.

\bibitem{bu2012misplaced}
K.~Bu, B.~Xiao, Q.~Xiao, and S.~Chen.
\newblock Efficient misplaced-tag pinpointing in large {RFID} systems.
\newblock {\em IEEE Transactions on Parallel and Distributed Systems},
  23(11):2094--2106, 2012.

\bibitem{chen2013understanding}
B.~Chen, Z.~Zhou, and H.~Yu.
\newblock Understanding {RFID} counting protocols.
\newblock In {\em ACM MobiHoc}, pages 291--302. ACM, 2013.

\bibitem{chen2011efficient}
S.~Chen, M.~Zhang, and B.~Xiao.
\newblock Efficient information collection protocols for sensor-augmented
  {RFID} networks.
\newblock In {\em IEEE INFOCOM}, pages 3101--3109. IEEE, 2011.

\bibitem{C1G22005}
{EPCglobal Inc.}
\newblock Radio-frequency identity protocols class-1 generation-2 {UHF} {RFID}
  protocol for communications at 860 mhz - 960 mhz version 1.0.9.
\newblock [Online], 2005.
\newblock Available:
  \url{http://www.gs1.org/gsmp/kc/epcglobal/uhfc1g2/uhfc1g2_1_0_9-standard-20050126.pdf}.

\bibitem{han2010counting}
H.~Han, B.~Sheng, C.~C. Tan, Q.~Li, W.~Mao, and S.~Lu.
\newblock Counting {RFID} tags efficiently and anonymously.
\newblock In {\em IEEE INFOCOM}, pages 1--9. IEEE, 2010.

\bibitem{han2014twins}
J.~Han, C.~Qian, X.~Wang, D.~Ma, J.~Zhao, P.~Zhang, W.~Xi, and Z.~Jiang.
\newblock Twins: Device-free object tracking using passive tags.
\newblock In {\em IEEE INFOCOM}, pages 469--476. IEEE, 2014.

\bibitem{hao2007building}
F.~Hao, M.~Kodialam, and T.~Lakshman.
\newblock Building high accuracy bloom filters using partitioned hashing.
\newblock In {\em ACM SIGMETRICS}, pages 277--288. ACM, 2007.

\bibitem{kodialam2007anonymous}
M.~Kodialam, T.~Nandagopal, and W.~C. Lau.
\newblock Anonymous tracking using {RFID} tags.
\newblock In {\em IEEE INFOCOM}, pages 1217--1225. IEEE, 2007.

\bibitem{la2011anticollision}
T.~F. La~Porta, G.~Maselli, and C.~Petrioli.
\newblock Anticollision protocols for single-reader {RFID} systems: temporal
  analysis and optimization.
\newblock {\em IEEE Transactions on Mobile Computing}, 10(2):267--279, 2011.

\bibitem{lee2008supply}
C.-H. Lee and C.-W. Chung.
\newblock Efficient storage scheme and query processing for supply chain
  management using {RFID}.
\newblock In {\em ACM SIGMOD}, pages 291--302. ACM, 2008.

\bibitem{li2010identifying}
T.~Li, S.~Chen, and Y.~Ling.
\newblock Identifying the missing tags in a large {RFID} system.
\newblock In {\em ACM MobiHoc}, pages 1--10. ACM, 2010.

\bibitem{liu2014query}
J.~Liu, B.~Xiao, K.~Bu, and L.~Chen.
\newblock Efficient distributed query processing in large rfid-enabled supply
  chains.
\newblock In {\em IEEE INFOCOM}, pages 163--171. IEEE, 2014.

\bibitem{liu2015completely}
X.~Liu, K.~Li, G.~Min, Y.~Shen, A.~X. Liu, and W.~Qu.
\newblock Completely pinpointing the missing {RFID} tags in a time-efficient
  way.
\newblock {\em IEEE Transactions on Computers}, 64(1):87--96, 2015.

\bibitem{luo2012probabilistic}
W.~Luo, S.~Chen, T.~Li, and Y.~Qiao.
\newblock Probabilistic missing-tag detection and energy-time tradeoff in
  large-scale {RFID} systems.
\newblock In {\em ACM MobiHoc}, pages 95--104. ACM, 2012.

\bibitem{luo2014missing}
W.~Luo, S.~Chen, Y.~Qiao, and T.~Li.
\newblock Missing-tag detection and energy--time tradeoff in large-scale {RFID}
  systems with unreliable channels.
\newblock {\em IEEE/ACM Transactions on Networking}, 22(4):1079--1091, 2014.

\bibitem{mitzenmacher2005probability}
M.~Mitzenmacher and E.~Upfal.
\newblock {\em Probability and computing: Randomized algorithms and
  probabilistic analysis}.
\newblock Cambridge University Press, 2005.

\bibitem{myung2006adaptive}
J.~Myung and W.~Lee.
\newblock Adaptive splitting protocols for {RFID} tag collision arbitration.
\newblock In {\em ACM MobiHoc}, pages 202--213. ACM, 2006.

\bibitem{namboodiri2010energy}
V.~Namboodiri and L.~Gao.
\newblock Energy-aware tag anticollision protocols for rfid systems.
\newblock {\em IEEE Transactions on Mobile Computing}, 9(1):44--59, 2010.

\bibitem{National2015}
{National Retail Federation}.
\newblock National retail security survey.
\newblock [Online], 2015.
\newblock Available:
  \url{https://nrf.com/resources/retail-library/national-retail-security-survey-2015}.

\bibitem{ni2011tracking}
L.~M. Ni, D.~Zhang, and M.~R. Souryal.
\newblock {RFID}-based localization and tracking technologies.
\newblock {\em IEEE Wireless Communications}, 18(2):45--51, 2011.

\bibitem{qian2011cardinality}
C.~Qian, H.~Ngan, Y.~Liu, and L.~M. Ni.
\newblock Cardinality estimation for large-scale {RFID} systems.
\newblock {\em IEEE Transactions on Parallel and Distributed Systems},
  22(9):1441--1454, 2011.

\bibitem{qiao2011polling}
Y.~Qiao, S.~Chen, T.~Li, and S.~Chen.
\newblock Energy-efficient polling protocols in {RFID} systems.
\newblock In {\em ACM MobiHoc}, page~25. ACM, 2011.

\bibitem{DoD2004}
{RFID Journal}.
\newblock {DoD} releases final {RFID} policy.
\newblock [Online], 2004.
\newblock Available:
  \url{http://www.rfidjournal.com/article/articleview/1080/1/1}.

\bibitem{DoD2007}
{RFID Journal}.
\newblock {DoD} reaffirms its {RFID} goals.
\newblock [Online], 2007.
\newblock Available:
  \url{http://www.rfidjournal.com/article/articleview/3211/1/1}.

\bibitem{shahzad2012everybit}
M.~Shahzad and A.~X. Liu.
\newblock Every bit counts: fast and scalable {RFID} estimation.
\newblock In {\em ACM Mobicom}, pages 365--376, 2012.

\bibitem{shahzad2013probabilistic}
M.~Shahzad and A.~X. Liu.
\newblock Probabilistic optimal tree hopping for {RFID} identification.
\newblock In {\em ACM SIGMETRICS}, volume~41, pages 293--304. ACM, 2013.

\bibitem{shahzad2015expecting}
M.~Shahzad and A.~X. Liu.
\newblock Expecting the unexpected: Fast and reliable detection of missing
  {RFID} tags in the wild.
\newblock In {\em IEEE INFOCOM}, pages 1939--1947. IEEE, 2015.

\bibitem{sheng2008finding}
B.~Sheng, C.~C. Tan, Q.~Li, and W.~Mao.
\newblock Finding popular categories for {RFID} tags.
\newblock In {\em ACM MobiHoc}, pages 159--168. ACM, 2008.

\bibitem{tan2008monitor}
C.~C. Tan, B.~Sheng, and Q.~Li.
\newblock How to monitor for missing {RFID} tags.
\newblock In {\em IEEE ICDCS}, pages 295--302. IEEE, 2008.

\bibitem{yang2013localization}
P.~Yang, W.~Wu, M.~Moniri, and C.~C. Chibelushi.
\newblock Efficient object localization using sparsely distributed passive
  {RFID} tags.
\newblock {\em IEEE Transactions on Industrial Electronics}, 60(12):5914--5924,
  2013.

\bibitem{zhang2011fast}
R.~Zhang, Y.~Liu, Y.~Zhang, and J.~Sun.
\newblock Fast identification of the missing tags in a large {RFID} system.
\newblock In {\em IEEE SECON}, pages 278--286. IEEE, 2011.

\bibitem{zheng2013zoe}
Y.~Zheng and M.~Li.
\newblock Zoe: Fast cardinality estimation for large-scale {RFID} systems.
\newblock In {\em IEEE INFOCOM}, pages 908--916. IEEE, 2013.

\end{thebibliography}
\end{document}